\begin{document}
\onehalfspacing
	
\author{Fabian Hahner}

\email{fhahner@uw.edu}

\address{University of Washington, Department of Physics \\ 3910 15th Ave NE, Seattle, WA 98195,  U.S.A}

\title{On square-zero elements in the IIB supersymmetry algebra}

\begin{abstract}
We describe the variety of square-zero elements in the (2,0) super Poincar\'e algebra in ten dimensions, together with its orbit stratification induced by the action of the spin group and R-symmetry. This provides a classification of all possible twisting supercharges of type IIB supergravity in a flat background. Compared to other varieties of square-zero elements in super Poincar\'e algebras, we find that the orbit structure is more involved and exhibits several novel features. Among other things, we show that there are two distinct orbits of holomorphic twisting supercharges and a number of mixed topological-holomorphic twists.
\end{abstract}
	
	\maketitle
	\thispagestyle{empty}
	
	\setcounter{tocdepth}{1}
	
        \newpage
	
	\setlength{\parskip}{7pt}

\section{Introduction and Summary}
Given a super Lie algebra $\fg = (\fg_+ \oplus \Pi \fg_-,  [-,-])$, one can consider the space of odd square-zero elements
\begin{equation}
	Y = \{ Q \in \fg_- \: | \: [Q,Q] = 0 \}.
\end{equation}
This space, termed the nilpotence variety in~\cite{NV}, plays an important role in the study of supersymmetric field theories as well as representation theory~\cite{DS}. By definition, $Y$ is an affine variety cut out by homogeneous quadratic equations; often it is useful consider its projective version. Further, it comes equipped with a group action by $G_+$, where $G_+$ denotes the simply connected Lie group integrating $\fg_+$.

The nilpotence varieties of super Poincar\'e algebras provide the natural moduli spaces of twists of supersymmetric field theories (and supergravity theories in flat backgrounds). Their description and thereby the corresponding classification of twisting supercharges was addressed for most examples of interest in the two seminal papers~\cite{ElliottSafronov, NV}. However, results on a few of these varieties are still missing. Crucially, this includes ten dimensional supersymmetry with 32 supercharges. In this note, we fill this gap for the ten-dimensional $\cN = (2,0)$ super Poincar\'e algebra.

Beyond classifying twisting supercharges, the nilpotence variety further plays a central role in the construction of supersymmetric field theories. In the pure spinor superfield formalism, supersymmetric multiplets and field theories are constructed from a (module over the) ring of functions on the variety of square-zero elements (see~\cite{Cederwall, Berkovits, perspectives, derivedps}). The algebraic geometry of the nilpotence variety is closely related to to both superspace geometry and the representation theory of supersymmetric multiplets~\cite{CY2, 6dbundles, sconf}. Further, the approach via pure spinor superfields has turned out very fruitful for the computation of twists of supersymmetric field theories~\cite{spinortwist}. This is of particular interest in the context of twisted holography, where twisted supergravity theories are a crucial ingredient~\cite{CostelloLi, Costello:2018zrm, Costello:2020jbh}. Explicit computations of twists have been carried out with these methods, for instance, for eleven-dimensional supergravity~\cite{spinortwist,RSW,MaxTwist,CY2}.

For type IIB supersymmetry, the pure spinor superfield multiplet obtained from the functions on the nilpotence variety\footnote{The canonical multiplet in the terminology of~\cite{MSJI}.} is known to reproduce the field content of type IIB supergravity~\cite{NV}. In~\cite{CostelloLi}, it was conjectured that holomorphically twisted supergravity can be described via BCOV theory. This was verified at free level using pure spinor methods in~\cite{spinortwist}. 

Here, we find that there are two orbits of holomorphic twisting supercharges and that their stabilizers differ by a factor of $\CC^\times$. It would be interesting to see whether these lead to inequivalent twists of type IIB supergravity. Further, we find different orbits of mixed topological-holomorphic supercharges: a single orbit of twisting supercharges with four topological directions together with three orbits of twists with eight topological directions. The stabilizers in the latter cases contain copies of $\Spin(7)$, $\mathrm{SL}(4)$, and $\mathrm{Sp}(4)$ suggesting that global versions of these twists constrain the topological directions to be corresponding special holonomy manifolds.

An interesting feature of the IIB nilpotence variety is, that the stratification by orbits of the spin group and R-symmetry is finer than the stratification by singularities. Typically, for nilpotence varieties of super Poincar\'e algebras with less supercharges both notions coincide~\cite{ElliottSafronov,NV}.

\subsection{Supersymmetry in ten dimensions and spinor orbits}
Let $(V, \langle-,-\rangle_V)$ be a complex ten-dimensional vector space equipped with a non-degenerate symmetric inner product. We denote the two 16-dimensional spinor representations of $\mathfrak{so}(V)$ by $S_\pm$. Recall that their symmetric square decomposes into $\mathfrak{so}(V)$-representations as $\Sym^2(S_{\pm}) = V \oplus Z_\pm$ where $Z_\pm$ are irreducible representations of dimension 126. We denote the projection on the vector representation by
\begin{equation}
\gamma : \Sym^2(S_\pm) \longrightarrow V.
\end{equation}
Further, we fix a two-dimensional vector space equipped with a non-degenerate symmetric inner product $W=(\CC^2, (-,-)_W)$. Then the complex $\cN=(2,0)$ supertranslation algebra in ten dimensions is the super Lie algebra with underlying $\ZZ_2$-graded vector space
\begin{equation}
\ft =  \Pi (S_+ \otimes W) \oplus V ,
\end{equation}
where the only non-vanishing brackets appear between two odd elements and is given by $\gamma \otimes (-,-)_W$. We extend this super Lie algebra to the ten-dimensional $\cN=(2,0)$ super Poincar\'e algebra by taking the semidirect product
\begin{equation}
\fp = \left( \mathfrak{so}(V) \oplus \mathfrak{o}(W) \right) \ltimes \ft.
\end{equation}
We are going to study the space of square-zero elements of this super Lie algebra together with its decomposition into orbits under the action of $G:= \Spin(V) \times \mathrm{O}(W) \times \CC^\times$, where $\CC^\times$ acts via rescaling.

\numpar[][Spinor orbits of $\Spin(10)$]
Recall that the Dirac spinor representations can be obtained by fixing a maximally isotropic subspace $L \subset V$ and setting $S = \wedge^\bullet L^\vee$. In our case---as in any even dimension---it decomposes into the two chiral pieces according to
\begin{equation}
S_+ = \wedge^{\mathrm{even}} L^\vee = \wedge^0 L^\vee\oplus \wedge^2 L^\vee \oplus \wedge^4 L^\vee  \qquad \text{and} \qquad S_- = \wedge^{\mathrm{odd}} L^\vee = \wedge^1 L^\vee \oplus \wedge^3 L^\vee \oplus \wedge^5 L^\vee.
\end{equation}
Further, the vector representation can be viewed as $V \cong L \oplus L^\vee$ with both $L$ and $L^\vee$ being maximally isotropic and paired under $\langle -,-\rangle_V$.\footnote{In the context of holomorphically twisted field theory, this becomes the decomposition into holomorphic and antiholomorphic directions.}
We can expand a general element $\psi \in S_+$ as $\psi = \psi^{(0)} + \psi^{(2)} + \psi^{(4)}$ according to their form degree. Fixing a top form $\Omega \in \wedge^5 L^\vee$ with corresponding inverse polyvector $\Omega^{-1} \in \wedge^5 L$, the map $\gamma: \Sym^2(S_+) \longrightarrow V$ can be expressed as
\begin{equation}
\gamma(\psi, \psi) = \Omega^{-1}(\psi^{(0)} \wedge \psi^{(4)}) - \frac{1}{2} \Omega^{-1}(\psi^{(2)} \wedge \psi^{(2)}) + \psi^{(2)} \vee \Omega^{-1}(\psi^{(4)}).
\end{equation}
Note that the first two terms land in $L$ while the third term lands in $L^\vee$.

To every $\psi \in S_+$, we can associate an isotropic subspace given by its annihilator under Clifford multiplication
\begin{equation}
\mathrm{Ann}(\psi) = \{ v \in V \: | \: v \cdot \psi = 0 \} \subset V.
\end{equation}

\begin{dfn}
	A spinor $\psi \in S_+$ is called pure if $\mathrm{Ann}(\psi)$ is maximally isotropic. In this case, we will also write $\mathrm{Ann}(\psi)=: L_\psi$. 
\end{dfn}

In fact, any maximally isotropic subspace in $V$ defines a unique pure spinor up to scale so that the projective variety of pure spinors $\cP$ can be identified with the orthogonal Grassmanian $\cP = \mathrm{OG}(5,10)$. Crucially the pure spinors are precisely the solutions to the equation $\gamma(\psi,\psi)=0$; in other words $\mathrm{OG}(5,10)$ is the projective nilpotence variety for ten-dimensional $\cN=1$ supersymmetry. The affine variety of pure spinors is a cone over this orthogonal Grassmannian; its affine dimension is 11. Note that, if we identify $S_+ \cong \wedge^{\mathrm{even}} L_\psi$, we have (up to scale) $\psi = 1 \in \wedge^0 L_{\psi}^\vee$.

The orbits of the minimal spin representations under $\Spin(d)$ were classified by Igusa for $d\leq 12$~\cite{Igusa}. For $d=10$ one finds that $S_+-\{0\}$ decomposes into two orbits.
\begin{itemize}
	\item[---] \emph{The pure spinor orbit:} The pure spinors form a single orbit under $\Spin(10)$, the stabilizer of a point can be identified with $\mathrm{SL}(L) \ltimes \wedge^2 L$. Here, $N_{10} = \wedge^2 L$ is viewed as a ten-dimensional nilpotent group acting on $S_+$ via contraction, i.e. a group element $\alpha \in N_{10}$ acts via $\psi \mapsto \psi + \alpha \vee \psi$.
	\item[---] \emph{The generic (open) orbit:} This orbit consists of all elements with $\gamma(\psi ,\psi) \neq 0$. We will sometimes refer to these as impure spinors. These form an open set in $S_+$ on which $\Spin(10)$ acts transitively. The affine dimension of the orbit therefore is 16. The stabilizer of a point is isomorphic to $\Spin(7) \ltimes \CC^8$.
\end{itemize}

\numpar[][R-symmetry orbits]
The R-symmetry space $W= (\CC^2 , (-,-)_W)$ decomposes under $\mathrm{O}(W) \times \CC^\times$ into two different orbits, namely the orbit of isotropic vectors $\{w \in W \: | \: (w,w)=0 \}$ and non-isotropic vectors $\{ w \in W \: | \: (w,w) \neq 0 \}$. Note that there are two linearly independent isotropic lines in $W$ (in a basis where $(-,-)_W$ is the unit matrix these are spanned by $(1,i)$ and $(1,-i)$) and these two lines are exchanged by the reflection in $\mathrm{O}(W)$.

\subsection{Summary of the results}
Viewing supercharges $Q \in S_+ \otimes W$ as maps $Q: W^\vee \longrightarrow S_+$ our orbit classification is based on two invariants: first, the rank of this map and, second, the intersection pattern of its projectivized image $\P(S_Q)$ with the pure spinor variety $\cP \subset \P(S_+)$. Here, the intersection can be a line, two points, one point, or empty. Only in the rank one case, two orbits of holomorphic twists are further distinguished by whether the R-symmetry vector is isotropic or not. Moreover, in the rank two case, there are no square-zero supercharges so that the intersection of the pure spinor variety with $\P(S_Q)$ is empty. We summarize the results in the following theorem.

\begin{thm}
	The variety of square-zero supercharges for ten-dimensional $\cN=(2,0)$ supersymmetry decomposes into the following orbits.
	\emph{
	\begin{table}[H]
		\centering
		\begin{tabular}{|l|l|l|l|l|}
			\hline
			& & & \\[-0.3cm]
			\text{Orbit} & \text{Stabilizer} & \text{Background} & \text{Representative} \\ \hline \hline
			& & & \\[-0.3cm]
			\underline{Rank 1:} & & & \\
			& & & \\[-0.3cm]
			$\{p\}$, iso & $\mathrm{SL}(5) \ltimes N_{10} \times \CC^\times$ & $\CC^5$ & $1 \otimes (1,i)$ \\ 
			& & & \\[-0.3cm]
			$\{p\}$, non-iso & $\mathrm{SL}(5) \ltimes N_{10}$  & $\CC^5$ & $1 \otimes (1,0)$\\ 
			& & & \\[-0.3cm]
			$\emptyset$, iso & $\Spin(7) \ltimes N_8 \times \CC^\times$ & $\R^{8} \times \CC$ & $(e_{23}^\vee + e_{45}^\vee) \otimes (1,i)$ \\ 
			& & & \\[-0.3cm] \hline
			& & & \\[-0.3cm]
			\underline{Rank 2:} & & & \\
			& & & \\[-0.3cm]
			$\P(S_Q)$ & $(\mathrm{SL}(2) \times \mathrm{SL}(3)) \ltimes N_{15} \times \CC^\times$ & $\RR^4 \times \CC^3$ & $1 \otimes (1,0) + e^\vee_{45} \otimes (0,1)$\\
			& & & \\[-0.3cm]
			$\{p_1,p_2\}$ & $\mathrm{SL}(4) \ltimes N_8$ & $\RR^8 \times \CC$ & $1 \otimes (1,0) + e^\vee_{2345} \otimes (0,1)$\\
			& & & \\[-0.3cm]
			$\{p\}$ & $\mathrm{Sp}(4) \ltimes N_{13} \times \CC^\times$& $\RR^{8} \times \CC$ & $1 \otimes (1,0) + (e^\vee_{23} + e^\vee_{45}) \otimes (1,i)$\\
			& & &\\[-0.3cm] \hline
		\end{tabular}
\end{table}}
\end{thm}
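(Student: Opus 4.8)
The plan is to linearize the square-zero condition by passing to an isotropic basis of the R-symmetry space, and then to reduce the $G$-orbit problem to a classification of pairs of spinors under $\Spin(V)$. Fix isotropic vectors $w_+, w_- \in W$ with $(w_+, w_-)_W = 1$ (concretely $w_\pm$ proportional to $(1, \pm i)$), and write a supercharge as $Q = \phi_+ \otimes w_+ + \phi_- \otimes w_-$ with $\phi_\pm \in S_+$. Since $(w_\pm, w_\pm)_W = 0$, the bracket collapses to $[Q,Q] = 2\,\gamma(\phi_+, \phi_-)$, so $Q$ is square-zero precisely when $\gamma(\phi_+, \phi_-) = 0$, and the rank of $Q\colon W^\vee \to S_+$ is $\dim\langle\phi_+,\phi_-\rangle \in \{0,1,2\}$ (rank $0$ being $Q=0$, set aside as a trivial orbit). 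The diagonal $\Spin(V)$ acts on $(\phi_+,\phi_-)$, while $\mathrm{SO}(W)$ together with the external $\CC^\times$ act on $W$ through all invertible diagonal and antidiagonal matrices in this basis, so $Q$ and $Q'$ are $G$-equivalent exactly when the associated projective pairs $([\phi_+],[\phi_-])$ and $([\phi'_+],[\phi'_-])$ are related by $\Spin(V)$, or by $\Spin(V)$ followed by the exchange induced by a reflection in $\mathrm{O}(W)$. Thus the classification amounts to describing the pairs of points in $\P(S_+)$ (one of them possibly replaced by the origin) with $\gamma(\phi_+,\phi_-) = 0$, up to $\Spin(V)$ and exchange; this datum records both the line $\ell = \P(S_Q)$ and its intersection with the pure spinor variety $\cP$, since on $\ell$ one has $\gamma(x\phi_+ + y\phi_-, x\phi_+ + y\phi_-) = x^2\,\gamma(\phi_+,\phi_+) + y^2\,\gamma(\phi_-,\phi_-)$.

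For rank $1$ one has $\phi_- = c\,\phi_+$. If $c = 0$ the R-symmetry vector $w_+$ is isotropic and the square-zero condition is automatic, so $\phi_+ \in S_+\setminus\{0\}$ is arbitrary; by Igusa's theorem this splits into the pure and the generic orbit, giving the two orbits ``$\{p\}$, iso'' ($1 \otimes (1,i)$) and ``$\emptyset$, iso'' ($(e_{23}^\vee + e_{45}^\vee)\otimes(1,i)$). If $c \neq 0$ the R-symmetry vector is non-isotropic, all such are $G$-equivalent, and $\gamma(\phi_+,\phi_-) = c\,\gamma(\phi_+,\phi_+) = 0$ forces $\phi_+$ to be pure, giving the single orbit ``$\{p\}$, non-iso'' ($1 \otimes (1,0)$).

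The rank-$2$ case is the heart of the matter, and I expect the analysis of impure spinors to be the main obstacle. Here $\phi_+,\phi_-$ are independent with $\gamma(\phi_+,\phi_-) = 0$, and I would split according to which of $\phi_\pm$ is pure. If both are pure, then $\gamma$ vanishes on $\Sym^2 S_Q$, so $\ell \subset \cP$, i.e. $\ell$ is a line on the spinor variety $\mathrm{OG}(5,10)$; all such lines form a single $\Spin(V)$-orbit (the spinor variety is a homogeneous space whose lines are $\Spin(V)$-equivalent), and since the stabilizer of the line surjects onto $\mathrm{PGL}(S_Q)$, the rescalings and exchange collapse everything to one $G$-orbit, ``$\P(S_Q)$''. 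If exactly one is pure, say $\phi_-$, then $\ell\cap\cP$ is the single point $[\phi_-]$ with multiplicity two; normalizing $\phi_- = 1$, the equation $\gamma(1,\phi_+) = \tfrac12\Omega^{-1}(\phi_+^{(4)}) = 0$ puts $\phi_+ \in \wedge^0 L^\vee \oplus \wedge^2 L^\vee$, impurity forces the two-form part to have rank $4$, the translations $\wedge^2 L$ in the stabilizer of $1$ kill the scalar part, and $\mathrm{SL}(L)$ acts transitively on projective rank-$4$ two-forms, giving the single orbit ``$\{p\}$''. The remaining case, both $\phi_\pm$ impure, is where I must exclude the empty intersection and thereby obtain the stratification finer than the one by singularities. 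Fixing $\phi_+$ in the generic orbit, write $V = \CC v \oplus \mathbf{8}_V \oplus \CC v'$ and $S_+ = \mathbf{8}_s \oplus \mathbf{8}_c$ for the decomposition under $\Spin(8)\subset\Spin(V)$ with $v \mapsto \gamma(\phi_+,\phi_+) \neq 0$; up to $\Spin(V)$ one finds $\phi_+$ is a non-isotropic $\Spin(8)$-spinor in $\mathbf{8}_s$, its stabilizer is $\Spin(7)$, and $\{\phi : \gamma(\phi_+,\phi) = 0\} = \phi_+^{\perp}\subset\mathbf{8}_s$, the $\mathbf{7}$. Crucially, $\gamma$ restricted to $\Sym^2\mathbf{8}_s$ is $\Spin(8)$-equivariant into the one-dimensional weight space $\CC v$, hence equals the invariant quadratic form tensored with $v$; therefore $\gamma(\phi_-,\phi_-)$ is always a multiple of $\gamma(\phi_+,\phi_+)$ (nonzero exactly when $\phi_-$ is non-isotropic in $\mathbf{7}$), so $\ell\cap\cP$ consists of two distinct points and the empty case cannot occur for rank $2$. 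Since $\Spin(7)$ acts transitively on non-isotropic lines in $\mathbf{7}$, this is the single orbit ``$\{p_1,p_2\}$''.

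It remains to check that the six cases are exhausted and pairwise distinct, and to compute the stabilizers and the backgrounds. Exhaustiveness is the case analysis above; distinctness holds because the rank, the isomorphism type of $\ell\cap\cP$, and (in rank $1$) the isotropy of the R-symmetry vector are $G$-invariants taking the six tabulated values. For each representative one computes the stabilizer in $\Spin(V)\times\mathrm{O}(W)\times\CC^\times$ by identifying the subgroup of $\Spin(V)$ preserving the relevant isotropic flag (a maximal parabolic for the holomorphic cases, the parabolic of an isotropic $3$-plane for ``$\P(S_Q)$'', and the mixed subgroups stabilizing an impure spinor or a rank-$4$ form for the others) and then imposing that its induced action on $S_Q$ be matched, through $Q$, by an element of $\mathrm{O}(W)\times\CC^\times$ acting on $W$; this is somewhat involved and produces the semisimple parts $\mathrm{SL}(5), \mathrm{SL}(4), \mathrm{Sp}(4), \mathrm{SL}(2)\times\mathrm{SL}(3)$, their unipotent radicals, and the surviving torus factors. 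Finally, the ``Background'' column is read off from the subspace $\mathrm{Im}\big([Q,-]\colon S_+\otimes W \to V\big)$ of $Q$-exact translations: the radical of the restriction of $\langle-,-\rangle_V$ to it contributes the holomorphic (complex) directions and the non-degenerate complement the topological (real) ones, so that in each case the counts satisfy $(\#\text{top}) + 2\,(\#\text{holo}) = 10$.
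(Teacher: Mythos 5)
Your proposal is correct, and it reaches the table by a genuinely different route from the paper's, even though the organizing invariants (rank of $Q$ and the intersection $\P(S_Q)\cap\cP$) are the same. The paper works with decompositions $Q=\psi_1\otimes w_1+\psi_2\otimes w_2$ adapted case by case to the intersection pattern, and its key tool in rank two is the classification of pairs of pure spinors by the dimension $r$ of $L_{\psi_1}\cap L_{\psi_2}$ (with $r=3$ giving the line, $r=1$ the two points), supplemented by explicit coordinate computations to exclude the empty case. You instead fix once and for all an isotropic basis $w_\pm$ of $W$, which linearizes the bracket to $[Q,Q]=2\gamma(\phi_+,\phi_-)$ and reduces the whole problem to classifying $\gamma$-orthogonal pairs $(\phi_+,\phi_-)$ under $\Spin(V)$, independent rescalings, and the swap; the case division by purity of $\phi_\pm$ then reproduces the intersection patterns via $\gamma(x\phi_++y\phi_-,\cdot)=x^2\gamma(\phi_+,\phi_+)+y^2\gamma(\phi_-,\phi_-)$. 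Your exclusion of the empty intersection is the most substantive divergence and it is sound: for impure $\phi_+$ the kernel of $\gamma(\phi_+,-)$ is exactly the $\mathbf{7}=\phi_+^\perp\subset\mathbf{8}_s$ (its dimension is $16-9=7$ by the rank computation, and it contains the $7$-dimensional orthocomplement inside $\mathbf{8}_s$ because $\gamma|_{\Sym^2\mathbf{8}_s}$ is the invariant form times $\gamma(\phi_+,\phi_+)$), whence $\gamma(\phi_-,\phi_-)$ is automatically proportional to $\gamma(\phi_+,\phi_+)$ and the line always meets $\cP$ in two points; transitivity of $\Spin(7)$ on non-isotropic vectors in $\mathbf{7}$ then replaces the paper's $r=1$ pure-spinor-pair lemma. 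What your approach buys is a uniform, coordinate-light treatment of the square-zero condition and of transitivity; what the paper's buys is the explicit representatives and, more importantly, the stabilizers. On that last point your proposal is only a sketch: the semisimple types, the unipotent dimensions $N_{10},N_{15},N_{13},N_8$, and the presence or absence of the residual $\CC^\times$ are part of the theorem's content and do require the explicit degree-by-degree computations the paper carries out; your outline for them is correct in structure but you should carry at least one out to confirm, e.g., why the $\{p_1,p_2\}$ orbit has no $\CC^\times$ factor while the others do.
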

Here, the representatives are with respect to a basis $(e_1^\vee , \dots , e_5^\vee)$ of $L^\vee$ where we have identified $S_+ = \wedge^{\mathrm{even}}L^\vee$ and use the shorthand $e_{ij}^\vee = e_i^\vee \wedge e_j^\vee$ etc. Further, $N_d$ denotes a nilpotent group of dimension $d$ and the background column refers to the geometry in which we expect the twist of type IIB supergravity in a flat background to live in (i.e. $\CC^5$ corresponds to a holomorpic twist while $\R^8 \times \CC$ signifies a single surviving translation).

In total, the orbit stratification is summarized in the following diagram. Here, we label orbits by their projective dimension together with the twisted background spacetime and the Levi of the stabilizer.
\[
\begin{tikzcd}[row sep=1.1em, column sep=2em, every node/.style={inner sep=3pt}]
& & & {\boxed{\begin{gathered}22\\[-0.7em] {\scriptstyle{\RR^8\times\CC,\ \mathrm{SL}(4)} }\end{gathered}}} & & & & \\
& & & {\boxed{\begin{gathered}21\\[-0.7em] {\scriptstyle \RR^8\times\CC,\ \mathrm{Sp}(4)}\end{gathered}}} 
\arrow[u] & & & & \\
& &  {\boxed{\begin{gathered}18\\[-0.7em] {\scriptstyle \RR^4\times\CC^3,\ \mathrm{SL}(2)\times\mathrm{SL}(3)}\end{gathered}}} 
\arrow[ur, bend left=12]  & & {\boxed{\begin{gathered}15\\[-0.7em] {\scriptstyle \RR^8\times\CC,\ \mathrm{Spin}(7)}\end{gathered}}} 
\arrow[ul, bend right=10] & & & \\
& & {\boxed{\begin{gathered}11\\[-0.7em] {\scriptstyle \CC^5,\ \mathrm{SL}(5)}\end{gathered}}} 
\arrow[u] & & & & & \\
& & & {\boxed{\begin{gathered}10\\[-0.7em] {\scriptstyle \CC^5,\ \mathrm{SL}(5)}\end{gathered}}} 
\arrow[uur, bend right=20] 
\arrow[ul, bend left=12] & & & & \\
& & & * \arrow[u] & & & &
\end{tikzcd}
\]

We provide a partial description of the geometry of these orbits. The 10-dimensional orbit can be identified with two copies of the pure spinor variety $\cP$, while the closure of the 11-dimensional orbit is $\cP \times \P^1$. The 15-dimensional stratum is two copies of $\P(S_+)$ and the 18-dimensional orbit forms a $\mathrm{PGL}(2)$-bundle over the orthogonal Grassmannian $OG(3,10)$.

\subsection*{Acknowledgements}
I would like to thank Simone Noja, Natalie Paquette, and Ingmar Saberi for helpful discussions related to this project.  This research is supported by the DOE Early Career Research Program under award DE-SC0022924.

\section{Classification of orbits}
We will view supercharges as maps $Q: W^\vee \longrightarrow S_+$ and denote their image by $S_Q \subset S_+$. A first distinction between different supercharges is provided by the rank of the supercharge, $\rank(Q) = \dim(S_Q)$. In many of the examples in lower dimensions and with less supersymmetry, the rank is already a complete invariant of orbits in the nilpotence variety and thereby sufficient for the classification of twists (see~\cite{ElliottSafronov, NV}). As we will see now, this is not true for type IIB supersymmetry.

\subsection{Rank one orbits}
For elements $Q= \psi \otimes w$ of rank one, the square-zero condition is simply
\begin{equation}
	[Q,Q] = \gamma(\psi, \psi) (w,w) = 0.
\end{equation}
Thus, we find that $\psi \in S_+$ has to be a pure spinor or $w \in W$ has to be isotropic. Accordingly, we have find that the rank one square-zero elements decompose into three different orbits.

\numpar[][The pure-isotropic orbit]
First, we have the option where $\psi$ is a pure spinor, i.e. $\gamma(\psi, \psi)=0$ and $w$ is isotropic. It is easy to see that all such supercharges are related by $G = \Spin(V) \times \mathrm{O}(W) \times \CC^\times$. 

Further, we can describe this orbit explicitly: Let $\cC \cP^\times \subset S_+$ denote the affine cone over the pure spinor variety and $\cC_{iso}^\times \subset W$ the cone of isotropic lines in $W$, both with the origins removed. Then we can take a pair $(\psi, w) \in \cC \cP^\times \times \cC_{iso}^\times$ and form $\psi \otimes w$ to get an element of this orbit. This is not injective, instead we have to take the quotient by the $\CC^\times$-action that identifies $(\psi, w) \sim (\alpha \psi, \alpha^{-1} \psi)$. Together with the $\CC^\times$ action from the projectivization we identify the orbit as $(\cC \cP^\times \times \cC_{iso}^\times)/(\CC^\times \times \CC^\times) \cong \cP \times OG(1,2)$, i.e. two copies of the pure spinor variety which are exchanged under the reflection in $\mathrm{O}(2)$.

Since $\psi$ is a pure spinor $\Im([Q,-]) \subset V$ is a maximally isotropic subspace so that the associated twisting supercharges are holomorphic. The stabilizer of the line spanned by any given supercharge in this orbit is $\mathrm{SL}(5) \ltimes \wedge^2 L \times \CC^\times$.

\numpar[][The pure-nonisotropic orbit]
Second, we find a family of square-zero supercharges, where $\psi$ is a pure spinor, but $w$ lies in the nonisotropic orbit inside $W$. It is clear that the closure of this orbit encompasses the pure-isotropic orbit discussed above. These supercharges correspond to the holomorphic twist of type IIB supergravity as studied in~\cite{CostelloLi}. The stabilizer of a point in the projective orbit is $\mathrm{SL}(5) \ltimes \wedge^2 L$, i.e. it differs from the pure-isotropic case just by a factor of $\CC^\times$. The closure of this orbit can be identified with $\cP \times \P^1$.

\numpar[][The impure spinor orbit]
If $\psi$ is not a pure spinor, then the square-zero condition is equivalent to $w\in W$ being isotropic. Recalling that the all impure spinors in $S_+$ lie in the open orbit of $\Spin(V)$ and that projectively, the isotropic lines in $W$ are reduced to points, we can identify this orbit with two disjoint unions of the open spinor orbit in $S_+$. In particular, its projective dimension is 15 and the stabilizer of a point is $\Spin(7) \times \CC^8 \times \CC^\times$.

A short calculation shows that there is a single translation surviving in twists of this type.
\begin{prop}
	For $Q$ in the impure spinor orbit, we have $\dim(\im([Q,-])) = 9$ so that the corresponding twist of type IIB supergravity in a flat background lives on $\RR^8 \times \CC$.
\end{prop}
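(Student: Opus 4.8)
The plan is to reduce the statement to the spinorial fact that, for an impure spinor $\psi\in S_+$, the map $\gamma(\psi,-)\colon S_+\to V$ has nine‑dimensional image. Recall that a supercharge in the impure spinor orbit is of the form $Q=\psi\otimes w$ with $\psi$ impure and $w\in W$ isotropic. By bilinearity of the bracket it is enough to evaluate $[Q,-]$ on elements $\psi'\otimes w'\in S_+\otimes W$, where
\begin{equation*}
[Q,\psi'\otimes w']=\gamma(\psi,\psi')\,(w,w')_W .
\end{equation*}
Since $(-,-)_W$ is nondegenerate and $W$ is two‑dimensional, the linear functional $(w,-)_W$ is surjective with kernel the isotropic line $\langle w\rangle$; choosing $w'$ with $(w,w')_W=1$ then shows $\im([Q,-])=\gamma(\psi,S_+)\subset V$, so the proposition is equivalent to $\dim\gamma(\psi,S_+)=9$.

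Next I would identify the orthogonal complement of $\gamma(\psi,S_+)$ in $V$. Up to the metric $\langle-,-\rangle_V$ and the perfect pairing $S_+\otimes S_-\to\CC$, the map $\gamma\colon\Sym^2 S_+\to V$ is the adjoint of Clifford multiplication $V\otimes S_+\to S_-$, so that $\langle\gamma(\psi,\psi'),v\rangle_V=\langle v\cdot\psi,\psi'\rangle$ for all $v\in V$ and $\psi'\in S_+$. Hence $v$ is orthogonal to $\gamma(\psi,S_+)$ exactly when $v\cdot\psi=0$, i.e.
\begin{equation*}
\gamma(\psi,S_+)^{\perp}=\mathrm{Ann}(\psi),\qquad\text{so}\qquad \dim\gamma(\psi,S_+)=10-\dim\mathrm{Ann}(\psi).
\end{equation*}
What remains is to show $\dim\mathrm{Ann}(\psi)=1$ for impure $\psi$; since $\gamma(\psi,\psi)\neq 0$ already excludes purity, the real content is the bound $\dim\mathrm{Ann}(\psi)\le 1$.

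For that final point I would use that all impure spinors lie in a single $\Spin(10)$‑orbit and check a convenient representative in the model $S_+=\wedge^{\mathrm{even}}L^\vee$, say $\psi=1+e^\vee_{2345}$. The formula for $\gamma$ gives $\gamma(\psi,\psi)=\Omega^{-1}(e^\vee_{2345})=\pm e_1\neq 0$, confirming impurity, and for $v=a+b\in L\oplus L^\vee$ the Clifford equation $v\cdot\psi=\iota_a\psi+b\wedge\psi=0$ (up to an overall constant) forces the degree‑one component $b$ to vanish and then $\iota_a e^\vee_{2345}=0$, i.e. $a\in\langle e_1\rangle$; thus $\mathrm{Ann}(\psi)=\langle e_1\rangle$ is one‑dimensional (and coincides with $\langle\gamma(\psi,\psi)\rangle$). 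Equivalently, one can expand $\gamma(\psi,-)$ from the polarized formula and read off directly that $\gamma(\psi,S_+)=L\oplus\langle e^\vee_2,e^\vee_3,e^\vee_4,e^\vee_5\rangle=\langle e_1\rangle^{\perp}$. Either way $\dim\im([Q,-])=9$. Finally, to obtain the geometric statement I would note that $\im([Q,-])=\mathrm{Ann}(\psi)^{\perp}$ contains the isotropic line $\mathrm{Ann}(\psi)$, so the single surviving translation $V/\im([Q,-])\cong\CC$ pairs nontrivially with it; following the same reasoning as in the analogous twist analyses in the literature, this is exactly the data of a twist that is topological along eight real directions and holomorphic along one complex direction, hence lives on $\RR^8\times\CC$. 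There is no serious obstacle here: the one point that requires care is fixing the Clifford and spinor‑pairing conventions so that the duality $\gamma(\psi,S_+)^{\perp}=\mathrm{Ann}(\psi)$ holds on the nose, which is in any case confirmed by the explicit representative computation.
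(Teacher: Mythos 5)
Your proof is correct. The core of your argument is the same as the paper's: fix the representative $\psi = 1 + e^\vee_{2345}$, do a short computation in the model $S_+=\wedge^{\mathrm{even}}L^\vee$, and transport the answer to the whole orbit by $\Spin(V)$-invariance; you also correctly handle the reduction $\im([Q,-])=\gamma(\psi,S_+)$ using nondegeneracy of $(-,-)_W$, a step the paper passes over quickly. The one genuine difference is that you dualize: instead of directly listing enough elements of $\gamma(\psi,S_+)$ to see it is $9$-dimensional (which is what the paper does, and which you mention as an aside), you use the adjunction $\langle\gamma(\psi,\psi'),v\rangle_V=\langle v\cdot\psi,\psi'\rangle$ to identify $\gamma(\psi,S_+)^{\perp}=\mathrm{Ann}(\psi)$ and reduce the claim to $\dim\mathrm{Ann}(\psi)=1$. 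This buys a cleaner conceptual statement ($\im([Q,-])=\mathrm{Ann}(\psi)^\perp$, which also makes the identification of the surviving holomorphic direction transparent), at the cost of a convention-dependence in the adjunction that you rightly flag and then discharge by the explicit representative check. Both routes are sound and of comparable length.
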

\begin{proof}
	To see this, we pick an impure spinor $\psi \in S_+$ and compute $\Im(\gamma(\psi,-))$. For example, take
	\begin{equation}
		\psi = 1 + e^\vee_{2345}. 
	\end{equation}
	Then we have $\gamma(\psi , e^\vee_1 \wedge \dots \Hat{e}_j^\vee \dots \wedge e^\vee_5) = e_j$ for $j = 1, \dots ,5$ (where the hat signifies omission) and $\gamma(\psi, e^\vee_i \wedge e^\vee_1) = e^\vee_i$ for $i=2 \dots 5$ so that we find $\Im(\gamma(\psi,-)) = \mathrm{span}(e_1,\dots e_5, e^\vee_2, \dots , e^\vee_5)$. Since the dimension of  $\Im([Q,-])$ is invariant under $\Spin(V)$, it follows that this is true for any impure spinor. Writing $Q= \psi \otimes w$ with $w$ in the isotropic line spanned by $(1,i)$ the claim follows.
\end{proof}

\subsection{Rank two elements}
Let us now turn our attention towards supercharges where the image $S_Q$ is of dimension two. We can expand a generic rank two element as
\begin{equation}
	Q = \psi_1 \otimes w_1 + \psi_2 \otimes w_2 = \begin{pmatrix}
	\psi_1 & \psi_2
	\end{pmatrix} \otimes
	\begin{pmatrix}
	w_1 \\ w_2
	\end{pmatrix}.
\end{equation}
However, we note that such a decomposition is non-canonical. Indeed, for any $A \in \GL(2,\CC)$ we can act via
\begin{equation}
	\begin{pmatrix}
	\psi_1 & \psi_2
	\end{pmatrix} \mapsto
	\begin{pmatrix}
	\psi_1 & \psi_2
	\end{pmatrix}
	A^{-1} \qquad \text{and} \qquad 
	\begin{pmatrix}
	w_1 \\ w_2
	\end{pmatrix} \mapsto
	A \begin{pmatrix}
	w_1 \\ w_2
	\end{pmatrix}
\end{equation}
to change the decomposition but leaving $Q$ invariant. 

A more invariant way to approach the orbit structure of rank two supercharges is the following. Recall that $\Spin(V)$ decomposes $\P(S_+)$ into two orbits: the projective pure spinor variety $\cP$ and the generic open orbit. The intersection pattern of the projective line $\P(S_Q)$ with $\cP$ distinguishes four different cases:
\begin{itemize}
	\item[1.] $\cP \cap \P(S_Q) = \P(S_Q)$: If the image of a given supercharge is contained in the pure spinor variety, this implies that any possible decomposition of the supercharge has $\psi_1$ and $\psi_2$ pure.
	\item[2.] $\cP \cap \P(S_Q) = \{p_1, p_2\}$: If the image intersects with the pure spinor variety in two points, this means that it is possible to choose a decomposition with $\psi_1$ and $\psi_2$ pure, but for a generic decomposition, this will not be true.
	\item[3.] $\cP \cap \P(S_Q) = \{p\}$: If the image intersects in a single point (i.e. the line $\P(S_Q)$ is tangent to $\cP$), then there exists a decomposition where $\psi_1$ is pure and $\psi_2$ lies in the generic orbit.
	\item[4.] $\cP \cap \P(S_Q) = \emptyset$: If the intersection with the image is empty, then, for any decomposition, $\psi_1$ and $\psi_2$ lie in the generic orbit.
\end{itemize}
We now investigate these cases individually.

\numpar[][$\cP \cap \P(S_Q) = \P(S_Q)$]
First, we note that supercharges of this type are automatically square-zero. Indeed pick any decomposition and expand:
\begin{equation}
[Q,Q] = \gamma(\psi_1,\psi_1) (w_1, w_1)_W + 2\gamma(\psi_1,\psi_2) (w_1, w_2)_W + \gamma(\psi_2,\psi_2) (w_2 ,w_2)_W =0.
\end{equation}
Since $\psi_1$ and $\psi_2$ are pure the first and the last term vanish. The middle term vanishes since any linear combination of them is also pure.

Thus, specifying a supercharge of this type means choosing an appropriate two-dimensional subspace $U \subset S_+$ together with an isomorphism $\varphi: W^\vee \longrightarrow U$. We now show that all such supercharges form a single orbit under $G$.
\begin{prop} \label{prop: orbit}
	$G$ acts transitively on pairs $(U, \varphi)$ where $U \subset S_+$ is a two-dimensional subspace so that $\cP \cap \P(S_Q) = \P(S_Q)$ and $\varphi: W^\vee \longrightarrow U$ is a linear isomorphism.
\end{prop}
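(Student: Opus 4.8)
\emph{Approach.} The plan is to decouple the statement into two transitivity claims: first that $\Spin(V)$ alone is transitive on the two-dimensional subspaces $U\subseteq S_+$ with $\P(U)\subseteq\cP$, and second that the stabilizer of one standard such $U$ is transitive on the linear isomorphisms $\varphi\colon W^\vee\longrightarrow U$. The rescaling $\CC^\times$ will be needed only for the second part, and the R-symmetry group $\mathrm{O}(W)$ will not be needed at all.

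\emph{Step 1: $\Spin(V)$ is transitive on pure pencils.} Any such $U$ contains a nonzero, hence pure, spinor, which by transitivity of $\Spin(V)$ on pure spinors I would move to $1\in\wedge^0 L^\vee\subseteq S_+$, so that $U=\langle 1,\psi\rangle$. Expanding $\psi=\psi^{(0)}+\psi^{(2)}+\psi^{(4)}$ and using $\gamma(1,1)=\gamma(\psi,\psi)=0$, purity of every element $a\cdot 1+b\psi$ reduces to the single equation $\gamma(1,\psi)=0$; polarizing the explicit formula for $\gamma$ this reads $\Omega^{-1}(\psi^{(4)})=0$, i.e. $\psi^{(4)}=0$, and then $\gamma(\psi,\psi)=0$ becomes $\psi^{(2)}\wedge\psi^{(2)}=0$, i.e. $\psi^{(2)}$ is a nonzero decomposable two-form. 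Subtracting the element $\psi^{(0)}\cdot 1\in U$, the subspace is $U=\langle 1,\psi^{(2)}\rangle$ and depends only on the point $[\psi^{(2)}]$ of the Grassmannian of decomposable two-forms inside $\P(\wedge^2 L^\vee)$. Since $\mathrm{SL}(L)\subseteq\Spin(V)$ fixes $1$ and acts on $\wedge^2 L^\vee$ in the standard way — in particular projectively transitively on $\mathrm{Gr}(2,L^\vee)$ — one can bring $[\psi^{(2)}]$ to $[e_{12}^\vee]$, so every pure pencil is $\Spin(V)$-conjugate to $U_0:=\langle 1,e_{12}^\vee\rangle$.

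\emph{Step 2: the stabilizer of $U_0$ is transitive on isomorphisms $W^\vee\to U_0$.} Since $\mathrm{Iso}(W^\vee,U_0)$ is a torsor under $\mathrm{GL}(U_0)$ acting by post-composition, it suffices to realize all of $\mathrm{GL}(U_0)$ inside the image of $\mathrm{Stab}_G(U_0)\to\mathrm{GL}(U_0)$. The rescaling $\CC^\times$ acts on $U_0\subseteq S_+$ by scalars, so only $\mathrm{SL}(U_0)$ remains to be produced. For this I would use the two opposite one-parameter unipotent subgroups of $\Spin(V)$ generated by $e_{12}\in\wedge^2 L$ and $e_{12}^\vee\in\wedge^2 L^\vee$ inside $\mathfrak{so}(V)=\wedge^2 L\oplus(L\otimes L^\vee)\oplus\wedge^2 L^\vee$: on $S_+$ these act as the square-zero operators $e_{12}\vee(-)$ and $e_{12}^\vee\wedge(-)$, so their exponentials are honest elements of $\Spin(V)$, they preserve $U_0$, and on the basis $(1,e_{12}^\vee)$ they act as the upper- and lower-triangular unipotent $2\times 2$ matrices, which together generate $\mathrm{SL}(U_0)$. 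Combining with the scalars coming from $\CC^\times$ gives all of $\mathrm{GL}(U_0)$ acting by post-composition, whence transitivity.

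\emph{Main obstacle.} The only genuinely substantive point is the computation in Step 1 identifying the pure pencils through $1$ with the lines spanned by $1$ together with a decomposable two-form; once this is in place, Step 1 is the classical projective transitivity of $\mathrm{SL}(5)$ on $\mathrm{Gr}(2,5)$, and Step 2 is just bookkeeping with two explicit unipotent flows and the overall $\CC^\times$. A minor care point throughout is keeping the $\Spin(V)$- versus $\mathrm{SO}(V)$-action straight on $S_+$, but this does not affect the orbit statement.
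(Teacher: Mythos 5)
Your proof is correct, and it diverges from the paper's argument in one substantive place. For the transitivity of $\Spin(V)$ on two-dimensional subspaces $U$ with $\P(U)\subset\cP$, the paper simply invokes the classical fact that the Fano variety of lines in the spinor tenfold is the homogeneous space $\mathrm{OG}(3,10)$, citing the literature; you instead derive this by hand: move one pure spinor of $U$ to $1\in\wedge^0 L^\vee$, polarize $\gamma$ to conclude $\psi^{(4)}=0$ and $\psi^{(2)}\wedge\psi^{(2)}=0$, and then use $\mathrm{SL}(L)$-transitivity on decomposable two-forms up to scale. This is more self-contained and has the side benefit of exhibiting the normal form $U_0=\langle 1,e^\vee_{12}\rangle$ explicitly (it also makes visible the intersection-dimension-three structure that the paper only establishes in the subsequent lemma). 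Your Step 2 is essentially identical to the paper's: both produce $\mathrm{SL}(U_0)$ from the opposite unipotents $e^\vee_{12}\wedge(-)$ and $e_{12}\vee(-)$ in $\mathfrak{so}(V)$; the only cosmetic difference is that you obtain the missing scalars from the global $\CC^\times$-rescaling, whereas the paper gets the determinant factor from the $\mathrm{GL}(2)\subset\Spin(10)$ stabilizing $\mathrm{span}(e_4^\vee,e_5^\vee)$. The one phrasing to tighten is ``reduces to the single equation $\gamma(1,\psi)=0$'': the purity of all of $U$ polarizes to the two conditions $\gamma(1,\psi)=0$ and $\gamma(\psi,\psi)=0$, both of which you do in fact use.
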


\begin{proof}
	Let $(U,\varphi)$ and $(U', \varphi')$ be two such pairs. First, we notice that $\Spin(V)$ acts transitively on the set of lines in the pure spinor variety $\cP$. (See for example~\cite{Kuznetsov_2018, LiuManivel}; the Fano variety of lines in $\cP$ is again a homogeneous space, in fact it can be identified with $\mathrm{OG}(3,10)$.) This means we can identify, $U'$ with $U$ after transforming with a suitable element $g \in \Spin(V)$.
	
	It remains to relate two linear isomorphisms $\varphi, \varphi' : W^\vee \longrightarrow U$ by $\mathrm{Stab}(U) \times \mathrm{O}(W) \times \CC^\times$, where $\mathrm{Stab}(U) \subset \Spin(V)$ is the stabilizer of $U$ under the action of the spin-group. However, note that the stabilizer $\mathrm{Stab}(U)$ contains a factor of $\mathrm{GL}(2)$ so that we can always relate two such linear isomorphisms. To be completely explicit, we can choose a basis $(e_1,\dots, e_5, e_1^\vee, \dots , e_5^\vee)$ of $V = L \oplus L^\vee$ so that $\psi_1 = 1$ and $\psi_2 = e_4^\vee \wedge e_5^\vee$. Then $\mathfrak{so}(10)$ contains elements acting via $E = (e_4^\vee \wedge e_5^\vee) \wedge (-)$ and $F = (e_4 \wedge e_5) \vee (-)$. Clearly these leave $U = \mathrm{span}(\psi_1, \psi_2)$ invariant and form with their commutator $H:=[E,F]$ an $\mathrm{SL}(2)$-triple. Further, consider the $\GL(2)$ subgroup of $\Spin(10)$ that stabilizes $\mathrm{span}(e_4^\vee, e_5^\vee)$. This subgroup also stabilizes $U$ and elements $g \in \GL(2)$ act via
	\begin{equation}
	1 \mapsto 1 \qquad e_4^\vee \wedge e_5^\vee \mapsto \det(g) e_4^\vee \wedge e_5^\vee.
	\end{equation}
	Together, we obtain a copy of $\GL(2)$ as claimed.
\end{proof}

Let us now investigate these supercharges in terms of the pairs of pure spinors $(\psi_1, \psi_2)$ that define them in any decomposition. Let us recall the following facts on pure spinors.
\begin{lem}
	Let $(\psi_1, \psi_2)$ be a pair of pure spinors associated to maximally isotropic subspaces $L_{\psi_1} , L_{\psi_2} \subset V$. Let $r = \dim(L_{\psi_1} \cap L_{\psi_2})$ be the intersection dimension of a pair of pure spinors.
	\begin{itemize}
	\item[1.] $\psi_1$ and $\psi_2$ are of the same chirality (i.e. $\psi_1, \psi_2 \in S_+$) if and only if $r$ is odd.
	\item[2.] $a \psi_1 + b \psi_2$ is pure for all $a,b \in \CC^\times$ if and only if $r \in \{5,3\}$.
	\item[3.] Let $(\psi'_1 , \psi'_2)$ be another pair of pure spinors with intersection dimension $r'$. They are related to $(\psi_1, \psi_2)$ up to scale via the diagonal action of $\Spin(V)$ if and only if $r' =r$
	\end{itemize}
\end{lem}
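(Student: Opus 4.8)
The plan is to put the pair of isotropic subspaces into a normal form and read all three statements off from it. Since $\Spin(V)$ acts transitively on each of the two connected components of the orthogonal Grassmannian $\mathrm{OG}(5,10)$, I would first move $L_{\psi_1}$ onto the reference subspace $L$, so that $\psi_1 = 1 \in \wedge^0 L^\vee$ up to scale. By Witt's extension theorem, $\mathrm{O}(V)$ acts transitively on pairs of maximal isotropic subspaces with a fixed intersection dimension; a convenient representative with $\dim(L \cap L_{\psi_2}) = r$ is $L_{\psi_2} = \mathrm{span}(e_1, \dots, e_r, e^\vee_{r+1}, \dots, e^\vee_5)$ in a basis adapted to $V = L \oplus L^\vee$. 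Solving $\iota_{e_i}\psi_2 = 0$ for $i \le r$ together with $e^\vee_j \wedge \psi_2 = 0$ for $j > r$ then forces, up to scale, $\psi_2 = e^\vee_{r+1} \wedge \cdots \wedge e^\vee_5 \in \wedge^{5-r} L^\vee$.

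Given this normal form, statement (1) is immediate: $\psi_2$ lies in $\wedge^{5-r}L^\vee$, which is contained in $S_+ = \wedge^{\mathrm{even}}L^\vee$ precisely when $5-r$ is even, i.e.\ when $r$ is odd, and this is the chirality of $\psi_1 = 1$. For statement (3), note first that $r = \dim(L_{\psi_1} \cap L_{\psi_2})$ is manifestly invariant under $\Spin(V)$, since the latter acts on $V$ through $\mathrm{SO}(V)$; conversely the reduction just described shows that a pair of pure spinors is determined up to scale and up to $\Spin(V)$ by $r$. The only point needing care here is the chirality bookkeeping: one uses that the stabiliser in $\mathrm{O}(V)$ of a maximal isotropic subspace lies inside $\mathrm{SO}(V)$, so that once the connected components of $L_{\psi_1}$ and $L_{\psi'_1}$ are matched---which, by (1), is compatible with fixing $r$---the transporting element from Witt's theorem can be chosen in $\mathrm{SO}(V)$ and hence lifted to $\Spin(V)$.

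For statement (2) I would invoke the fact, recalled above, that in ten dimensions a spinor is pure if and only if it is annihilated by $\gamma$. Polarising the quadratic form gives $\gamma(a\psi_1 + b\psi_2, a\psi_1 + b\psi_2) = a^2\gamma(\psi_1,\psi_1) + 2ab\,\gamma(\psi_1,\psi_2) + b^2\gamma(\psi_2,\psi_2)$, and since $\psi_1, \psi_2$ are pure the outer two terms vanish; hence $a\psi_1 + b\psi_2$ is pure for all $a,b \in \CC^\times$ if and only if $\gamma(\psi_1,\psi_2) = 0$. As purity of two spinors of the same chirality forces $r$ odd, it remains to evaluate $\gamma\bigl(1,\, e^\vee_{r+1} \wedge \cdots \wedge e^\vee_5\bigr)$ with the explicit formula for $\gamma$ in the cases $r = 5, 3, 1$: for $r = 5$ the two spinors are proportional; for $r = 3$ one has $\psi_2 \in \wedge^2 L^\vee$ and every term in the polarised expression for $\gamma(\psi_1,\psi_2)$ vanishes on degree grounds; and for $r = 1$ one has $\psi_2 \in \wedge^4 L^\vee$ and the surviving term $\Omega^{-1}(\psi_1^{(0)} \wedge \psi_2^{(4)})$ is a nonzero vector in $L$, so that $a\psi_1 + b\psi_2$ is impure whenever $ab \neq 0$. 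This yields $\gamma(\psi_1,\psi_2) = 0 \iff r \in \{3,5\}$, as claimed.

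Each of the $\gamma$-evaluations in the last step is a one-line computation, and the normal-form reduction is standard once Witt's theorem is in hand; the part I expect to be most delicate is the interplay of chirality and orientation needed to make statements (1) and (3) consistent, so that is where I would be most careful.
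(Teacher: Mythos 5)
Your argument is correct, and it is organized differently from the paper's. The paper treats items (1) and (2) as standard facts and cites the literature for them, reserving its proof for item (3), which it establishes by hand: it builds adapted decompositions $V = K \oplus K^\vee \oplus A \oplus B$ for each pair of isotropic subspaces and maps one adapted basis to the other --- in effect re-proving exactly the instance of Witt's theorem that you invoke as a black box. Your normal form $\psi_1 = 1$, $\psi_2 \propto e^\vee_{r+1}\wedge\cdots\wedge e^\vee_5$ buys you a uniform treatment: (1) and (2) then follow by inspection and by a one-line evaluation of the displayed formula for $\gamma$, so your write-up is more self-contained on those two items, while the paper's is more self-contained on (3). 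One point where you go beyond the paper: the transporting element produced by Witt's theorem a priori lies only in $\mathrm{O}(V)$, and one must check it lies in $\mathrm{SO}(V)$ to lift it to $\Spin(V)$; your observation that the stabilizer of a maximal isotropic subspace is contained in $\mathrm{SO}(V)$ (so that any orthogonal transformation carrying one isotropic of the $S_+$-family to another automatically has determinant one) settles this, whereas the paper stops at ``preserves the bilinear form and is hence orthogonal'' and leaves the determinant implicit. The only caveat on your side is that the transitivity of the stabilizer of $L$ on maximal isotropics with fixed intersection dimension with $L$ --- which you use to reach the normal form for $L_{\psi_2}$ after fixing $L_{\psi_1}=L$ --- is precisely what the paper's adapted-basis construction proves, so a fully self-contained version of your argument would reproduce that construction rather than cite Witt's theorem for pairs.
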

Hence, we see that supercharges in this orbit precisely arise from pairs of pure spinors with intersection dimension three. (Note that for $r=5$, the $\psi_1$ and $\psi_2$ are already proportional so that we are back to the rank one case.)
\begin{proof}
	The first two items are standard, see e.g~\cite{CharltonThesis, Budinich:1989bg}
	
	For the third part, note that the intersection dimension is invariant under the action of $\Spin(V)$ which establishes that $r=r'$ is necessary for the two pairs to be related via the group action. To see that it is sufficient, we now construct an orthogonal transformation $f: V \longrightarrow V$ that maps the pair $(L_{\psi_1} , L_{\psi_2})$ to $(L_{\psi'_1},L_{\psi'_2})$. This can be arranged explicitly as follows.
	
	Let $K = L_{\psi_1} \cap L_{\psi_2}$. Since $K$ is isotropic, we can choose another isotropic subspace $K^\vee \subset V$ so that the bilinear form $\langle -,-\rangle_V$ on $V$ restricts to a perfect pairing $K \times K^\vee \longrightarrow \CC$. Further, we choose complementary subspaces $L_{\psi_1} = K \oplus A$ and $L_{\psi_2} = K \oplus B$. Then $\langle -,-\rangle_V$ also restricts to a perfect pairing $A \times B \longrightarrow \CC$.
	The fact that this pairing is perfect follows from the maximality of $L_{\psi_1}$ and $L_{\psi_2}$.
	We apply the same procedure to $K' = L_{\psi'_1} \cap L_{\psi'_2}$.
	
	In total, we obtain two decompositions of $V$
	\begin{equation}
	V = K \oplus K^\vee \oplus  A \oplus B \quad \text{and} \quad V = K' \oplus K'^\vee \oplus  A' \oplus B' ,
	\end{equation}
	together with bases
	\begin{equation}
	(u_i,u^\vee_i, a_j, b_j) \quad \text{and} \quad (u'_i,u'^\vee_i, a'_j, b'_j) \qquad i=1 \dots r, j = 1 \dots 5-r
	\end{equation}
	that both bring $\langle-,-\rangle_V$ to a standard block diagonal form.
	
	We define $f$ by mapping the unprimed basis to the primed basis. Then it is clear that $f$ interchanges the relevant subspaces and further preserves the bilinear form and is hence orthogonal.
\end{proof}

We see that these supercharges of this type give rise to what is often known as the $\mathrm{SL}(3)$-twist of type IIB supergravity.
\begin{lem} \label{lem: r= 3}
	Supercharges in this orbit have three surviving translations. The corresponding twist of type IIB supergravity in a flat background lives on $\CC^3 \times \RR^4$. The stabilizer of any such supercharge is $(\mathrm{SL(3)} \times \mathrm{SL}(2)) \ltimes N_{15} \times \CC^\times$.
\end{lem}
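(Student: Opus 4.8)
The plan is to verify everything on the explicit representative from the proof of Proposition~\ref{prop: orbit}, namely $Q_0 = 1 \otimes (1,0) + e^\vee_{45}\otimes(0,1)$, whose two factors $\psi_1 = 1$ and $\psi_2 = e^\vee_{45}$ are pure with $L_{\psi_1}\cap L_{\psi_2} = \mathrm{span}(e_1,e_2,e_3)$, i.e.\ with intersection dimension $r=3$ as required. First I would count surviving translations. For a pure spinor $\psi$ one has $\im(\gamma(\psi,-)) = L_\psi$: this follows from $v\cdot\psi = 0 \Leftrightarrow v\in L_\psi$ together with the symmetry identity $\langle\gamma(\psi,\chi),v\rangle \propto \langle v\cdot\psi,\chi\rangle$, which forces $\im(\gamma(\psi,-))\subseteq L_\psi^\perp = L_\psi$, with equality by dimension; alternatively one reads it off the explicit formula for $\gamma$ exactly as in the impure-spinor Proposition above. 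Hence $\im([Q_0,-]) = L_{\psi_1}+L_{\psi_2}$, of dimension $5+5-3 = 7$, so that $\dim\bigl(V/\im([Q_0,-])\bigr) = 3$: three translations survive, and by $\Spin(V)$-invariance of this number the same holds for every $Q$ in the orbit.

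For the residual geometry I would use that the $L_{\psi_i}$ are maximally isotropic, so $L_{\psi_i}^\perp = L_{\psi_i}$ and therefore $\bigl(\im([Q_0,-])\bigr)^\perp = L_{\psi_1}\cap L_{\psi_2} =: K_0$, a three-dimensional isotropic subspace contained in $\im([Q_0,-])$. Thus $\im([Q_0,-])$ is seven-dimensional with three-dimensional radical $K_0$ and four-dimensional non-degenerate quotient $\im([Q_0,-])/K_0$; feeding this into the usual twisting dictionary — the three cokernel directions become holomorphic while a real form of the non-degenerate four-dimensional piece becomes topological — the twist lives on $\CC^3\times\RR^4$.

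The stabilizer is the substantive part. I would pass from $Q_0$ to the pair $(U,\varphi)$ it defines, $U := S_{Q_0} = \mathrm{span}(\psi_1,\psi_2)$ and $\varphi = \hat{Q}_0 : W^\vee \to U$. An element $(g,h,\lambda)\in G$ fixes $Q_0$ exactly when $g$ preserves $U$ and the induced operator $g|_U\in\GL(U)$ equals $\lambda^{-1}$ times the image of $h^{-1}$ transported to $\GL(U)$ by conjugation with $\varphi$; equivalently $g\in\mathrm{Stab}_{\Spin(V)}(U)$ and $g|_U$ lies in the two-dimensional conformal-orthogonal subgroup $\varphi\,\mathrm{GO}(W)\,\varphi^{-1}\subset\GL(U)$, and then $(h,\lambda)$ is determined by $g|_U$ up to a finite group. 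So $\mathrm{Stab}_G(Q_0)$ is, modulo a finite kernel, $\{g\in\mathrm{Stab}_{\Spin(V)}(U):\ g|_U\ \text{conformal-orthogonal}\}$. Now $U$ determines, and inside $S_+$ is determined by, the isotropic three-plane $K_0 = L_{\psi_1}\cap L_{\psi_2}$, so $\mathrm{Stab}_{\Spin(V)}(U)$ is the parabolic $P$ stabilizing $K_0$, with Levi $\GL(K_0)\times\Spin(C)$, $C = K_0^\perp/K_0\cong\CC^4$, $\Spin(C)\cong\mathrm{SL}(2)\times\mathrm{SL}(2)$, and fifteen-dimensional unipotent radical $N$. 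One computes the restriction $P\to\GL(U)$ on the Levi: $\GL(K_0)$ acts on $U$ through a power of the determinant (hence by scalars), the $\mathrm{SL}(2)$-factor of $\Spin(C)$ acting on the ruling $\P(U)$ of the quadric surface $\mathrm{OG}(2,C)$ acts through its standard representation, and the other $\mathrm{SL}(2)$-factor together with all of $N$ acts trivially — consistent with the surjectivity of $P\to\GL(U)$ already observed in the proof of Proposition~\ref{prop: orbit}. Intersecting with the conformal-orthogonal condition then keeps $N$, keeps the trivially-acting $\mathrm{SL}(2)$, keeps $\mathrm{SL}(K_0)\cong\mathrm{SL}(3)$, and cuts the standardly-acting $\mathrm{SL}(2)$ and the determinant torus down — together with the rescaling $\CC^\times$ — to the stated torus factor, yielding $(\mathrm{SL}(3)\times\mathrm{SL}(2))\ltimes N_{15}\times\CC^\times$.

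The hard part is the identification of the restriction map $P\to\GL(U)$ that was just invoked: the precise determinant twist by which $\GL(K_0)$ acts on $U$, and which of the two $\mathrm{SL}(2)$-rulings of $\mathrm{OG}(2,C)$ is $\P(U)$. I would settle this either representation-theoretically — branching the chiral spinor of $\Spin(10)$ to the Levi $\GL(3)\times\Spin(4)$ and locating $U$ as the extreme graded summand, hence annihilated by $N$ and carrying an extreme $\GL(3)$-weight — or explicitly, by writing down the relevant $\mathfrak{so}(10)$-generators, in the spirit of the $\mathrm{SL}(2)$-triple $E,F$ and the $\GL(2)$ exhibited in the proof of Proposition~\ref{prop: orbit}, and evaluating their action on $1$ and $e^\vee_{45}$. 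Once the Levi action on $U$ is pinned down, the remaining steps — intersecting with $\mathrm{GO}(W)$, keeping track of the finite center and the rescaling $\CC^\times$, and matching dimensions — are routine.
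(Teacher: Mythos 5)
Your counting of surviving translations and the identification of the background geometry coincide with the paper's argument: both rest on $\Im([Q,-])=L_{\psi_1}+L_{\psi_2}$, of dimension $7$ with radical $K=L_{\psi_1}\cap L_{\psi_2}$. For the stabilizer, however, you take a genuinely different route. The paper picks the representative $Q=1\otimes(1,0)+e^\vee_{45}\otimes(0,1)$, decomposes $\mathfrak{so}(10)\oplus\mathfrak{o}(2)$ as $(X_+,X_-,A,t)$, and reads off the stabilizer from the degree-$0$, $2$, $4$ components of $gQ=0$: $X_+$ is fixed, one linear condition on $X_-$, and a block structure on $A$ whose diagonal blocks give $\mathrm{SL}(3)\times\mathrm{SL}(2)$ while the off-diagonal block joins $X_-$ to form $N_{15}$. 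Your argument instead identifies $\mathrm{Stab}_{\Spin(V)}(U)$ with the parabolic $P$ of the isotropic $3$-plane $K_0$ (via the Fano-variety correspondence already invoked in Proposition~\ref{prop: orbit}) and analyzes the restriction $P\to\GL(U)$. This is more conceptual and explains \emph{why} the answer has the shape it does (Levi $\GL(3)\times\Spin(4)$, half-spin branching of $U$, $15$-dimensional nilradical), at the cost of having to pin down the determinant twist and the choice of $\mathrm{SL}(2)$-ruling; you correctly flag this as the remaining work, and either of your proposed methods settles it (e.g.\ $\GL(K_0)$ acts on $U$ by $\det^{1/2}$ and $N$ acts trivially, as one checks on $1$ and $e^\vee_{45}$).

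One piece of bookkeeping does not close as written. The preimage in $P$ of the identity component of $\varphi\,\mathrm{GO}(W)\,\varphi^{-1}$, a \emph{two}-dimensional torus in $\GL(U)$, is $28$-dimensional: $\dim P=30$, the restriction to $\GL(U)$ is surjective, so the fiber over a $2$-torus has dimension $30-4+2=28$. Your method, carried out consistently with $G=\Spin(V)\times\mathrm{O}(W)\times\CC^\times$, therefore produces $(\mathrm{SL}(3)\times\mathrm{SL}(2))\ltimes N_{15}\times(\CC^\times)^2$ up to finite covers --- consistent with orbit--stabilizer, since the orbit of pairs $(U,\varphi)$ has dimension $\dim OG(3,10)+\dim\GL(2)=19$ and $47-19=28$. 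The $27$-dimensional group with a single $\CC^\times$ that the paper states (and that its explicit computation yields) is the stabilizer of $Q$ under $\Spin(V)\times\mathrm{O}(W)$ alone, without the overall rescaling. So your final sentence, which claims the two-dimensional conformal-orthogonal condition ``cuts down to the stated torus factor,'' conflates these two conventions; you should either exclude the rescaling $\CC^\times$ from the acting group from the start, or report the extra torus factor.
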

\begin{proof}
	Recall that $L_{\psi_i} = \Im(\gamma(\psi_i , -))$. Thus, we have $\Im([Q,-]) = L_{\psi_1} + L_{\psi_2} \subset V$ and in particular $\dim(\Im([Q,-])) = 7$. Denoting $K = L_{\psi_1} \cap L_{\psi_2}$, we can decompose $V=K^\vee \oplus K \oplus K^\perp /K$. This is the decomposition into holomorphic, antiholomorphic and topological directions. 
	
	In order to compute the stabilizer, we pick a representative of this orbit for example
	\begin{equation}
		Q = 1 \otimes (1,0) + e_4^\vee \wedge e_5^\vee \otimes (0,1).
	\end{equation}
	We decompose an element $g \in \so(10) \oplus \mathfrak{o}(2)$ according to $(X_+, X_-, A, t)$ where $X_+ \in \wedge^2 L^\vee$ acts via wedging, $X_- \in \wedge^2 L$ via contraction, and $A \in L \otimes L^\vee$ as a degree preserving endomorphism while $t$ is the parameter of the rotation in $\mathfrak{o}(2)$. The condition to leave $Q$ invariant, seperates into three conditions corresponding to the degree in $S_+ \cong \wedge^\bu L^\vee$.
	\begin{equation}
	\begin{matrix}
	\text{deg}=0: & \tr(A) = 0 & \text{and} & X_- \vee (e_4^\vee \wedge e_5^\vee) = t \\
	\text{deg}=2: & A(e_4^\vee \wedge e_5^\vee) = 0 & \text{and} & X_+ = t e^\vee_4 \wedge e^\vee_5 \\
	\text{deg}=4: & X_+ \wedge e_4^\vee \wedge e_5^\vee = 0
	\end{matrix}
	\end{equation}
	Thus, we immediately see that $X_+$ is completely fixed; the degree zero condition imposes a single equation on $X_-$ leaving nine generators preserving $Q$. The degree two condition introduces a block decomposition of $A$, where the diagonal blocks form the semisimple piece $\mathrm{SL}(3) \times \mathrm{SL}(2)$, the upper-triangular block has to vanish while the lower-triangular $3 \times 2$-block combines with the components of $X_-$ to form $N_{15}$. Together with the action of $\CC^\times \cong O(2)$, the claim follows.
\end{proof}

\numpar[][$\cP \cap \P(S_Q) = \{p_1, p_2\}$]
In this case, we can find a decomposition of $Q$ in terms of two pure spinors $(\psi_1, \psi_2)$, however, an arbitrary decomposition will not be of this form. This means that there is a basis of $S_Q$ given by pure spinors but linear combinations are generally not pure. This identifies $(\psi_1, \psi_2)$ as a pair of pure spinors with intersection dimension $r=1$. We note that such supercharges are not automatically square-zero. Indeed, the square-zero condition collapses to
\begin{equation}
[Q,Q] = 2 \gamma(\psi_1, \psi_2) (w_1 ,w_2)_W = 0.
\end{equation}
Since $\psi_1 + \psi_2$ is not pure, we have $\gamma(\psi_1, \psi_2) \neq 0$, which forces the R-symmetry vectors $(w_1, w_2)_W$ to be orthogonal.

\begin{prop}
	$G$ acts transitively on square-zero supercharges of this type.
\end{prop}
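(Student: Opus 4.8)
The plan is to combine three inputs: transitivity of $\Spin(V)$ on ordered pairs of pure‑spinor lines with fixed intersection dimension, the rigidity of the square‑zero condition inside $W$, and the fact that $\mathrm{O}(W)$ acts simply transitively on orthonormal frames of $W$. First I would observe that, since $\P(S_Q)$ meets $\cP$ in the two distinct points $p_1,p_2$, the only pure lines in $S_Q$ are the ones lying over $p_1,p_2$; hence $Q$ has a decomposition $Q=\psi_1\otimes w_1+\psi_2\otimes w_2$ with $\psi_1,\psi_2$ pure, unique up to independently rescaling the $\psi_i$ (and relabelling). As the text notes, such a pair has odd intersection dimension $r$, and $r\neq5$ (else $\psi_1,\psi_2$ are proportional and $Q$ has rank one) and $r\neq3$ (else all combinations $a\psi_1+b\psi_2$ are pure, the previous case), so $r=1$. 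Given a second square‑zero supercharge $Q'$ of the same type, with pure decomposition $Q'=\psi_1'\otimes w_1'+\psi_2'\otimes w_2'$, item~3 of the pure‑spinor lemma above furnishes $g\in\Spin(V)$ with $g\psi_i=\lambda_i\psi_i'$; absorbing the scalars $\lambda_i$ into $w_i$ and doing likewise for $Q'$, I reduce to the case where both supercharges are written with the \emph{same} fixed pair $\psi_1=1$, $\psi_2=e^\vee_{2345}$, whose maximal isotropics meet in $K=L_{\psi_1}\cap L_{\psi_2}=\mathrm{span}(e_1)$.

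Second, I would pin down the R‑symmetry data. Rank two of $Q$ forces $w_1,w_2$ to be linearly independent in $W$, and the square‑zero condition is exactly $(w_1,w_2)_W=0$. Since $(W,(-,-)_W)$ is two‑dimensional and nondegenerate, an isotropic vector $w$ has $w^\perp=\CC w$; so if, say, $w_1$ were isotropic, then $w_2\in w_1^\perp=\CC w_1$ would contradict independence. Hence both $w_i$ are non‑isotropic and $(w_1,w_2)$ is an orthogonal basis of $W$, and the same holds for $(w_1',w_2')$.

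Third, I need enough rescaling freedom to normalize the lengths $(w_i,w_i)_W$. The key point is that the one‑parameter subgroup $\rho(t)\subset\Spin(V)$ covering the transformation of $\mathrm{SO}(V)$ that scales $e_2,\dots,e_5$ by $t$ and $e^\vee_2,\dots,e^\vee_5$ by $t^{-1}$ (fixing $e_1,e_1^\vee$) preserves both $L_{\psi_1}$ and $L_{\psi_2}$, and — by a short computation with the spin representation on $\wedge^\bullet L^\vee$ — acts on the two pure spinors by $\psi_1\mapsto t^2\psi_1$, $\psi_2\mapsto t^{-2}\psi_2$. Combining $\rho(t)$ with the overall rescaling $\CC^\times\subset G$ realizes the full $(\CC^\times)^2$ of independent rescalings $(w_1,w_2)\mapsto(st^2w_1,\,st^{-2}w_2)$, so I may assume $(w_i,w_i)_W=1$ for $Q$ and likewise for $Q'$; then $(w_1,w_2)$ and $(w_1',w_2')$ are orthonormal bases. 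Finally, $\mathrm{O}(W)$ acts simply transitively on ordered orthonormal bases of $W$, so some $h\in\mathrm{O}(W)$ sends $(w_1,w_2)$ to $(w_1',w_2')$, whence $(\mathrm{id}_{S_+}\otimes h)\,Q=Q'$. Composing the element of $\Spin(V)$ from the first step, the element of $\Spin(V)\times\CC^\times$ from the normalization, and $h$ produces the required element of $G$; running the same normalization from an arbitrary $Q$ also yields the representative $1\otimes(1,0)+e^\vee_{2345}\otimes(0,1)$.

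The routine parts are the isotropy dichotomy for $W$ and the transitivity of $\mathrm{O}(2,\CC)$ on orthonormal frames. I expect the one genuinely load‑bearing step to be exhibiting, \emph{inside} $\Spin(V)$ rather than merely projectively, a subgroup that stabilizes both pure‑spinor lines and scales them by mutually inverse characters: this is exactly what makes the orthogonal frame $(w_1,w_2)$ rescalable to an orthonormal one, and hence what couples the $\Spin(V)$ part of the argument to the $\mathrm{O}(W)$ part. It is handled by the explicit $\rho(t)$ above, together with the remark that $S_+$ is a genuine representation of $\Spin(V)$, so such one‑parameter subgroups (possibly connected double covers of $\CC^\times$) act as stated. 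As a by‑product the argument records that the decomposition data $(\{\psi_1,\psi_2\},\{w_1,w_2\})$ is essentially unique, which is convenient input for the ensuing stabilizer computation.
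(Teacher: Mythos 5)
Your proposal is correct and takes essentially the same route as the paper: reduce to the standard pure pair $(1, e^\vee_{2345})$ via item~3 of the pure-spinor lemma, note that $w_1,w_2$ must be orthogonal and hence both non-isotropic, and then use the residual scaling freedom inside the stabilizer of the pair in $\Spin(V)$ together with $\mathrm{O}(W)\times\CC^\times$ to finish. Your one-parameter subgroup $\rho(t)$ is just the central torus of the $\GL(4)$ acting on $\mathrm{span}(e_2^\vee,\dots,e_5^\vee)$ that the paper invokes (acting through its determinant character on $\psi_2$ relative to $\psi_1$), so the two normalizations of the $W$-data are the same mechanism organized in a slightly different order.
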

\begin{proof}
	We show this by transforming an arbitrary supercharge $Q' = \psi'_1 \otimes w'_1 + \psi'_2 \otimes w'_2$ to a standard form $Q = \psi_1 \otimes (1,0) + \psi_2 \otimes (0,1)$ where $\psi_1 = 1 \in \wedge^0 L^\vee$ and $\psi_2 = e^\vee_{2345} \in \wedge^4 L^\vee$.

	First, we can choose a $g \in \Spin(V)$ to align $g(\psi'_1) = a \psi_1$ and $g(\psi'_2) = b \psi_2$ up to scales. 
	Recall that $w'_1$ and $w_{2}$ are linearly independent and orthogonal, so that they are in particular both non-isotropic. 
	Thus, we can use $\mathrm{O}(W)$ and $\CC^\times$ to achieve
	\begin{equation}
		Q' \mapsto \psi_1 \otimes (1,0) + t \psi_2 \otimes (0,1) \qquad \text{for some } t \in \CC^\times.
	\end{equation}
	Finally we note that $\Spin(10)$ contains $\GL(A)$ as a subgroup where $A = \mathrm{span}(e_2^\vee, \dots , e_5^\vee)$. Clearly any $g \in \GL(A)$ stabilizes $S_Q$ and acts via $1 \mapsto 1$ and $\psi_2 \mapsto \det(g) \psi_2$ so that we can use this action to set $t=1$.
\end{proof}

With the same reasoning as in the previous paragraph, we find that such twists have precisely one holomorphic direction.
\begin{lem}
	Supercharges of this type have one surviving translation. The corresponding flat background twist lives on $\CC \times \RR^8$. The stabilizer is $\mathrm{SL}(4) \ltimes N_8$.
\end{lem}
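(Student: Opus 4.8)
The plan is to run the argument from the proof of Lemma~\ref{lem: r= 3}, adapted to the new representative. For the count of surviving translations, recall that for a pure spinor $\psi_i$ one has $L_{\psi_i} = \Im(\gamma(\psi_i,-))$, so since $w_1,w_2$ span $W$ we get $\Im([Q,-]) = L_{\psi_1} + L_{\psi_2}$. Because the pair $(\psi_1,\psi_2)$ has intersection dimension $r = 1$, this subspace has dimension $5 + 5 - 1 = 9$, hence exactly one translation survives. Writing $K = L_{\psi_1} \cap L_{\psi_2}$ for the common line, the decomposition $V = K^\vee \oplus K \oplus K^\perp/K$ into holomorphic, antiholomorphic and topological directions has $\dim K = 1$ and $\dim(K^\perp/K) = 8$, which gives the twisted background $\CC \times \RR^8$.

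For the stabilizer I would pick the representative $Q = 1 \otimes (1,0) + e^\vee_{2345} \otimes (0,1)$ and decompose $g \in \so(10) \oplus \mathfrak{o}(2)$ as $(X_+, X_-, A, t)$ exactly as in the proof of Lemma~\ref{lem: r= 3}, where $X_+ \in \wedge^2 L^\vee$ acts by wedging, $X_- \in \wedge^2 L$ by contraction, $A \in L \otimes L^\vee$ as a degree-preserving endomorphism acting on $\wedge^0 L^\vee$ by a nonzero multiple of $\tr(A)$, and $t$ is the rotation parameter of $\mathfrak{o}(2)$ on $W$. Imposing $g \cdot Q = 0$ and separating the result by R-symmetry weight and by form degree in $S_+ \cong \wedge^\bullet L^\vee$ produces a short list of linear conditions. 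The one structural difference from Lemma~\ref{lem: r= 3} is that here $\psi_2 = e^\vee_{2345}$ lives in form degree $4$ rather than degree $2$: since the $\so(10)$-action on $\psi_1 = 1$ has components only in degrees $0$ and $2$, the degree-$4$ part of the weight-$(1,0)$ equation is just the $\mathfrak{o}(2)$ contribution proportional to $t\, e^\vee_{2345}$, which forces $t = 0$ and removes the $\CC^\times$ factor that survived in the $r = 3$ case; the degree-$0$ and degree-$2$ parts then give $\tr(A) = 0$ and $X_+ = 0$. With $X_+ = t = 0$, the weight-$(0,1)$ equation splits into $X_- \vee e^\vee_{2345} = 0$, restricting $X_-$ to the $4$-dimensional subspace $e_1 \wedge L_4$ with $L_4 := \mathrm{span}(e_2,\dots,e_5)$, and the derivation equation $A \cdot e^\vee_{2345} = 0$; writing $A$ in block form for $L = \mathrm{span}(e_1) \oplus L_4$ and using $\tr(A) = 0$, the latter annihilates the $\mathrm{span}(e_1)$-diagonal entry of $A$ and the block $\mathrm{Hom}(\mathrm{span}(e_1), L_4)$, makes the $L_4$-block traceless, and leaves $\mathrm{Hom}(L_4, \mathrm{span}(e_1))$ unconstrained.

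Collecting the surviving generators, the semisimple part of the stabilizer Lie algebra is $\mathfrak{sl}(L_4) \cong \mathfrak{sl}(4)$, while the surviving components of $X_-$ together with the free block $\mathrm{Hom}(L_4, \mathrm{span}(e_1))$ span a nilpotent ideal of dimension $4 + 4 = 8$, on which $\mathrm{SL}(4)$ acts as $L_4 \oplus L_4^\vee$; a short bracket computation shows this ideal is in fact abelian. Exponentiating gives the stabilizer $\mathrm{SL}(4) \ltimes N_8$, of dimension $15 + 8 = 23$, as claimed. Given the setup inherited from Lemma~\ref{lem: r= 3}, no step is really an obstacle; the one place that needs care is the degree-$4$ equation $A \cdot e^\vee_{2345} = 0$, where one must track the derivation action on the top form of $L_4^\vee$ correctly---using $\tr(A) = 0$ to decide which off-diagonal block of $A$ dies and which survives---and then check that the brackets among the remaining generators among $X_-$ and $A$ stay inside the stabilizer subalgebra.
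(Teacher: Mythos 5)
Your proposal is correct and follows essentially the same route as the paper: the translation count via $\Im([Q,-]) = L_{\psi_1}+L_{\psi_2}$ with $r=1$ (which the paper handles by reference to the $r=3$ case), and the stabilizer computation from the same representative $1\otimes(1,0)+e^\vee_{2345}\otimes(0,1)$ with the same $(X_+,X_-,A,t)$ decomposition, yielding the same conditions $\tr(A)=0$, $t=0$, $X_+=0$, $X_-\vee e^\vee_{2345}=0$, $A(e^\vee_{2345})=0$ and hence $\mathrm{SL}(4)\ltimes N_8$. Your block analysis and the observation that the nilpotent radical is abelian are just more detailed versions of the paper's one-line conclusion.
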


\begin{proof}
	The stabilizer can be computed from an explicit representative $Q = 1 \otimes (1,0) + e_{2345}^\vee \otimes (0,1)$. Using the same notation as in the proof of Lemma~\ref{lem: r= 3} we find the following conditions:
	\begin{equation}
	\begin{matrix}
	\text{deg}=0: & \tr(A) = 0 & \text{and} & t=0 \\
	\text{deg}=2: & X_+ = 0 & \text{and} & X_- \vee e_{2345}^\vee = 0 \\
	\text{deg}=4: & A(e_{2345}^\vee) = 0
	\end{matrix}
	\end{equation}
	A $4\times 4$-block of $A$ supplies the semisimple part $\mathrm{SL}(4)$, while the upper triangular $1 \times 4$-block combines with the four remaining components of $X_-$ to $N_8$.
\end{proof}

\numpar[][$\cP \cap \P(S_Q) = \{p\}$]
Here, we can always fix a decomposition so that $\psi_1 = \psi_p$ is a pure spinor, while $\psi_2=\psi_g$ lies in the generic open orbit. In other words, we are dealing with a subspace $S_Q$ where we can choose a basis so that one element is a pure spinor. In this decomposition, the square-zero condition reduces to
\begin{equation}
[Q,Q] = 2 \gamma(\psi_p, \psi_g) (w_1 , w_2)_W + \gamma(\psi_g , \psi_g) (w_2, w_2)_W = 0.
\end{equation}
Note that $\gamma|_{S_Q}$ defines a linear map $\Sym^2(S_Q) \longrightarrow V$. It makes sense to distinguish supercharges by the rank of this map. Note that $\Im(\gamma|_{S_Q}) = \mathrm{span}(\gamma(\psi_p, \psi_g) , \gamma(\psi_g, \psi_g))$ so that the rank can only be one or two. However, if the rank is two, the square-zero condition enforces $w_2$ to be isotropic and $w_1$ to be orthogonal to $w_2$. Since the line spanned by $w_1$ is maximal isotropic in $W$, this already implies that $w_1$ and $w_2$ are proportional so that we are back in the rank one case and do not find any additional square zero supercharges.

If $\rank(\gamma|_{S_Q}) = 1$, we have $\gamma(\psi_p, \psi_g) = \alpha \gamma(\psi_g, \psi_g)$ for some $\alpha \in \CC$. A short calculation shows that the condition $\P(S_Q) \cap \cP = \{p\}$ already implies that $\alpha = 0$. Indeed, for non-zero $\alpha$, the pair $(\psi_p , \psi_p - 2 \alpha \psi_g)$ is a basis of $S_Q$ where both elements are pure spinors.

Thus, we find that the square-zero conditions for such a supercharge $Q$ reduces to the requirement that $w_2$ is isotropic.

\begin{lem}
	$G$ acts transitively on supercharges of this type.
\end{lem}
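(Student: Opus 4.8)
The plan is to normalise any square-zero supercharge $Q$ of this type to the representative $Q_0 = 1\otimes(1,0) + (e^\vee_{23}+e^\vee_{45})\otimes(1,i)$ in two stages: first use $\Spin(V)$ to fix the two-plane $S_Q\subset S_+$ together with its distinguished pure line, and then use $\mathrm{O}(W)$ together with the stabiliser of $S_Q$ in $\Spin(V)$ to fix the R-symmetry data. By the discussion preceding the lemma, $Q$ has a decomposition $Q = \psi_p\otimes w_1 + \psi_g\otimes w_2$ with $\psi_p$ pure, $\psi_g$ in the generic orbit, $\gamma(\psi_p,\psi_g)=0$, and $w_2$ isotropic; moreover $w_1$ and $w_2$ are linearly independent, since otherwise $\rank Q\le 1$.

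\emph{Stage 1.} Using transitivity of $\Spin(V)$ on nonzero pure spinors I may assume $\psi_p = 1$ in the model $S_+\cong\wedge^{\mathrm{even}}L^\vee$, so that $L_{\psi_p}=L$. From the formula for $\gamma$ in the introduction, the condition $\gamma(\psi_p,\psi_g)=0$ is then equivalent to $\psi_g$ having vanishing $\wedge^4$-component, i.e. $\psi_g = c\cdot 1 + \beta$ with $\beta\in\wedge^2 L^\vee$; absorbing $c\,\psi_p$ into the decomposition (which merely shifts $w_1$) I may take $\psi_g = \beta$. Since $\gamma(\beta,\beta)=-\frac{1}{2}\,\Omega^{-1}(\beta\wedge\beta)\ne 0$, the two-form $\beta$ has rank $4$, and because $\GL(L)\subset\Spin(V)$ acts transitively on the rank-$4$ alternating forms on $L$ --- the open orbit of the prehomogeneous vector space $(\GL(L),\wedge^2 L^\vee)$ --- I can arrange $\beta = e^\vee_{23}+e^\vee_{45}$. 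Equivalently, this is the statement that $\Spin(V)$ acts transitively on the lines in $\P(S_+)$ tangent to $\cP$ but not contained in it: $\gamma(\psi_p,\psi_g)=0$ says that $\P(S_Q)$ lies in the projective tangent space $T_{[\psi_p]}\cP = \P(\langle 1\rangle\oplus\wedge^2 L^\vee)$, and the rank of $\beta$ separates the lines inside $\cP$ (rank $\le 2$) from the genuinely tangent ones (rank $4$). After Stage 1 we have $Q = 1\otimes w_1 + (e^\vee_{23}+e^\vee_{45})\otimes w_2$ with $w_2$ still isotropic and $w_1,w_2$ linearly independent.

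\emph{Stage 2.} It remains to normalise the pair $(w_1,w_2)$, using $\mathrm{Stab}_{\Spin(V)}(U)\times\mathrm{O}(W)$ with $U = \mathrm{span}(1,\,e^\vee_{23}+e^\vee_{45})$; note that any $g\in\Spin(V)$ preserving $U$ automatically fixes the line $\langle 1\rangle$, as $1$ is the unique pure spinor in $U$ up to scale. One checks that $\mathrm{Stab}_{\Spin(V)}(U)$ maps onto the Borel of $\GL(U)$ fixing $\langle 1\rangle$: the diagonal part $1\mapsto\lambda\cdot 1,\ \psi_g\mapsto\mu\,\psi_g$ comes from the Cartan of $\GL(L)$, and the shear $1\mapsto 1,\ \psi_g\mapsto\psi_g+\delta\cdot 1$ is realised by $\exp(\alpha)$ for $\alpha\in\wedge^2 L\subset N_{10}$ with $\iota_\alpha(e^\vee_{23}+e^\vee_{45})\ne 0$ --- this preserves $U$ precisely because contraction sends the degree-$2$ part of $\psi_g$ to degree $0$. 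Hence on $Q$ this group acts by $(w_1,w_2)\mapsto(\lambda w_1+\delta w_2,\ \mu w_2)$, while $\mathrm{O}(W)$ acts diagonally on $(w_1,w_2)$. Now use $\mathrm{O}(W)$ --- which is transitive on the two isotropic lines of $W$ and can moreover rescale freely inside each --- to set $w_2 = (1,i)$; writing $w_1 = a(1,i)+b(1,-i)$ with $b\ne 0$, the remaining transformations $(w_1,w_2)\mapsto(\lambda w_1+\delta(1,i),\ (1,i))$ determine $\lambda,\delta$ uniquely so that $w_1$ becomes $(1,0)$. This yields $Q_0$ and proves transitivity.

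I expect Stage 1 to be the crux, since that is where the geometry of pure spinors enters; concretely it reduces to the classical fact that $\GL(L)$ has a dense orbit on $\wedge^2 L^\vee$, equivalently that the lines tangent to the ten-dimensional spinor variety but not lying on it form a single $\Spin(10)$-orbit. Stage 2 contains one minor subtlety to watch for: of the two unipotent shears of $\GL(U)$ fixing $\langle 1\rangle$, only $\psi_g\mapsto\psi_g+\delta\cdot 1$ is realised inside $\Spin(10)$ (the opposite one, $1\mapsto 1+\delta'\psi_g$, would require an element taking $1$ to a spinor with nonzero $\wedge^4$-component) --- but this is exactly the shear needed to absorb the $(1,i)$-component of $w_1$.
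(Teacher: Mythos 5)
Your proof is correct and follows essentially the same route as the paper: normalize $\psi_p = 1$ by transitivity on pure spinors, deduce from $\gamma(\psi_p,\psi_g)=0$ that $\psi_g$ is a rank-four two-form (modulo $\psi_p$), and use $\GL(5)$-transitivity on rank-four alternating forms to reach $e^\vee_{23}+e^\vee_{45}$. Your Stage 2 merely repackages the paper's case split on whether $w_1$ is isotropic into a single uniform statement about the Borel of $\GL(U)$ realized inside $\mathrm{Stab}(U)$ (the shear by $\exp(\alpha)$, $\alpha\in\wedge^2 L$, is exactly the paper's contraction by $e_2\wedge e_3$), and you are in fact slightly more careful than the paper in accounting for the $\wedge^0$-component of $\psi_g$.
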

\begin{proof}
	We show this, by bringing $Q = \psi_p \otimes w_1 + \psi_g \otimes w_2$ to a standard form via the group action.
	
	First, we can use the transitive action of $\Spin(V)$ on pure spinors to arrange for $\psi_p = 1 \in \wedge^0 L^\vee$. Then, we can write $\psi_g$ as $\psi_g = \psi_g^{(2)} + \psi_g^{(4)}$ with two- and four-form components. Since $\gamma(\psi_p, \psi_g) = 0$, we immediately find that $\psi_g^{(4)}=0$.
	Further, $\psi^{(2)}$ has to be of full rank for $\psi_g$ to be impure. Recall that the stabilizer of the line spanned by $\psi_p$ is $\mathrm{GL}(5) \ltimes \wedge^2 L$. We can use the $\mathrm{GL(5)}$ action to arrange for $\psi_g = e_{23}^\vee + e_{45}^\vee$. 
	
	Since $w_2$ is isotropic, we can apply $\mathrm{O}(2) \times \CC^\times$ to bring it to $(1,i)$. If $w_1$ is non-isotropic, we can immediately apply $\mathrm{O}(2)$ to bring it to $(a,0)$ for some $a \neq 0$ and then apply $\mathrm{diag}(a^{-1},1,1,1,1) \in \mathrm{GL}(5)$ so that we arrive at the standard form:
	\begin{equation} \label{eq: rep 0}
	1 \otimes (1,0) + (e_{23}^\vee + e_{45}^\vee) \otimes (1,i).
	\end{equation}
	If $w_1$ is isotropic, it is proportional to $(1,-i)$ and we can bring it to~\eqref{eq: rep 0} by first applying a contraction with $e_2 \wedge e_3 \in \wedge^2 L$ and then proceeding as above.
\end{proof}

\begin{lem}
	Supercharges of this type have a single surviving translation so that the corresponding flat background twist live on $\RR^8 \times \CC$. The stabilizer is $\mathrm{Sp}(4) \ltimes N_{13} \times \CC^\times$.
\end{lem}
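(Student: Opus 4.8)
The plan is to argue exactly as in the proof of Lemma~\ref{lem: r= 3} and the preceding lemmas, working with the explicit representative $Q = 1 \otimes (1,0) + (e_{23}^\vee + e_{45}^\vee) \otimes (1,i)$ of~\eqref{eq: rep 0}; this is legitimate since $G$ acts transitively on the orbit and both $\dim \Im([Q,-])$ and the isomorphism type of the stabilizer are $G$-invariant.

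\emph{Surviving translations.} Write $Q = \psi_p \otimes w_1 + \psi_g \otimes w_2$ with $w_1 = (1,0)$ and $w_2 = (1,i)$. Since $w_1, w_2$ are linearly independent and $(-,-)_W$ is non-degenerate, the map $w' \mapsto ((w_1,w')_W, (w_2,w')_W)$ is an isomorphism $W \to \CC^2$, so $\Im([Q,-]) = \Im(\gamma(\psi_p,-)) + \Im(\gamma(\psi_g,-)) = L_{\psi_p} + \Im(\gamma(\psi_g,-))$. The second summand has dimension $9$, since $\psi_g$ lies in the generic $\Spin(V)$-orbit and the dimension of this image is a $\Spin(V)$-invariant, already computed in the treatment of the impure spinor orbit; moreover evaluating $[Q,-]$ on the representative shows $\Im([Q,-]) = \langle e_1,\dots,e_5,e_2^\vee,\dots,e_5^\vee\rangle$, which in particular contains $L_{\psi_p} = L$, so that the pure spinor $\psi_p$ contributes no new translation. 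Hence $\dim \Im([Q,-]) = 9$, there is a single surviving translation, and, exactly as for the impure spinor orbit, the corresponding flat background twist lives on $\RR^8 \times \CC$.

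\emph{Stabilizer.} Decompose $g \in \mathfrak{so}(10) \oplus \mathfrak{o}(2)$ as $(X_+, X_-, A, t)$ in the notation of the proof of Lemma~\ref{lem: r= 3}: $X_+ \in \wedge^2 L^\vee$ acts by wedging, $X_- \in \wedge^2 L$ by contraction, $A \in \mathfrak{gl}(L)$ as a degree-preserving endomorphism of $S_+ \cong \wedge^{\bullet} L^\vee$ (with the usual trace shift between form degrees), and $t$ parametrises the $\mathfrak{o}(2)$-rotation. Imposing $g \cdot Q = 0$ and separating by form degree in $S_+$ and by R-symmetry weight in $W$, the equations collapse to: $X_+ = 0$; a pair of degree-zero relations tying $t$ and $X_- \vee (e_{23}^\vee + e_{45}^\vee)$ to $\tr(A)$, leaving one free scalar among the three; and a degree-two relation forcing $A$ to preserve the two-form $\omega := e_{23}^\vee + e_{45}^\vee$ up to that same trace shift. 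Block-decomposing $A$ along $L = \langle e_1\rangle \oplus \langle e_2, \dots, e_5\rangle$, where $\langle e_1\rangle = \mathrm{rad}(\omega)$, this last relation kills the block $\langle e_1\rangle \to \langle e_2,\dots,e_5\rangle$, leaves the block $\langle e_2,\dots,e_5\rangle \to \langle e_1\rangle \cong \CC^4$ free, and restricts the $\mathfrak{gl}(4)$-block to $\mathfrak{sp}(4)$ plus a scalar tied to $\tr(A)$. Reading off the solution space: the reductive Levi is the copy of $\mathfrak{sp}(4)$ acting on $L/\mathrm{rad}(\omega)$; one central $\CC$ survives, combining the scaling of $\mathrm{rad}(\omega)$, the trace, and $t$, and exponentiates to the factor $\CC^\times$; and the remaining nilpotent directions---the $\CC^4$ off-diagonal block of $A$ together with the nine-dimensional space $\{X_- \in \wedge^2 L : X_- \vee \omega = 0\}$---assemble into the thirteen-dimensional nilpotent ideal $N_{13}$, on which $\mathrm{Sp}(4) \times \CC^\times$ acts by conjugation. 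Hence $\mathrm{Stab}(Q) = \mathrm{Sp}(4) \ltimes N_{13} \times \CC^\times$.

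The step I expect to be delicate is the last one: one must keep track of the trace-shift scalars governing the $\mathfrak{gl}(L)$-action on the summands $\wedge^k L^\vee$ precisely enough to confirm that the degree-zero and degree-two equations are consistent and leave exactly one central parameter---neither zero, which would drop the $\CC^\times$, nor two---and then cleanly split the reductive Levi $\mathfrak{sp}(4) \oplus \CC$ off from the unipotent radical, verifying that the $\CC^4$-block of $A$ and the constrained $X_-$ really do form a single nilpotent ideal of dimension $13$ with the expected $\mathrm{Sp}(4)$-action and no further relations.
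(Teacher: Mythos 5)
Your proposal is correct and follows essentially the same route as the paper: computing $\Im([Q,-])$ from the explicit representative of~\eqref{eq: rep 0} (the paper simply reads off $\mathrm{span}(e_1,\dots,e_5,e_2^\vee,\dots,e_5^\vee)$, while you additionally note the splitting $L_{\psi_p}+\Im(\gamma(\psi_g,-))$), and then determining the stabilizer by imposing $g\cdot Q=0$ in the $(X_+,X_-,A,t)$ decomposition, separating by form degree, and block-decomposing $A$ along $\mathrm{rad}(\omega)$ to extract $\mathfrak{sp}(4)$, one surviving central scalar, and the $4+9=13$ nilpotent parameters. The "delicate" bookkeeping you flag is exactly what the paper carries out explicitly (its degree-$0$ and degree-$2$ equations $\tr(A)=t$, $X_-\vee\omega+t=0$, $A(\omega)+it\,\omega=0$), and it confirms your expected outcome.
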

\begin{proof}
	From the representatives above we find that $\Im([Q,-]) = \mathrm{span}(e_1, \dots , e_5, e^\vee_2, \dots, e^\vee_5)$. This identifies $e_1^\vee$ as the single surviving holomorphic direction in the twist.
	
	We compute the stabilizer using the representative from~\eqref{eq: rep 0}. Explicitly, we find the following conditions for the stabilizer.
	\begin{equation}
	\begin{matrix}
	\text{deg}=0: & \tr(A) - t= 0 & \text{and} & X_- \vee (e_{23}^\vee + e_{45}^\vee) + t = 0 \\
	\text{deg}=2: & X_+ = 0 & \text{and} & A(e_{23}^\vee + e_{45}^\vee) + it (e_{23}^\vee + e_{45}^\vee) =0 \\
	\text{deg}=4: & X_+ \wedge (e_{23}^\vee + e_{45}^\vee) = 0
	\end{matrix}
	\end{equation}
	We can investigate the condition posed on $A \in \mathrm{GL(5)}$ by introducing a block decomposition
	\begin{equation}
		A = \begin{pmatrix}
		\alpha & B \\
		C & D
		\end{pmatrix}
	\end{equation}
	where $\alpha \in \CC$, $D \in \mathrm{Mat}_{4 \times 4}(\CC)$, $B \in \mathrm{Mat}_{1 \times 4}(\CC)$ and $\mathrm{Mat}_{4 \times 1}(\CC)$. Viewing $e_{23}^\vee + e_{45}^\vee$ as a symplectic structure on $\mathrm{span}(e_2, \dots, e_5)$, we see that the degree 2 condition restricts $D$ to $\mathfrak{sp}(4) \times \CC$ and imposes $\mathrm{tr}(D) + t =0$. Further, it sets $B=0$ and leaves $C$ arbitrary. The degree zero condition further imposes $\alpha + \tr(D) = t$. A single linear equation on $X_-$ leaving $9$ free parameters.
\end{proof}

\numpar[][$\cP \cap \P(S_Q) =  \emptyset$]
Finally let us show that there are no square-zero supercharges of this type.
\begin{lem}
	There are no rank two square-zero supercharges with $\cP \cap \P(S_Q) =  \emptyset$.
\end{lem}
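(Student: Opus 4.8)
The plan is to reduce the claim to a short computation with a single normalized impure spinor and then to exhibit a pure spinor inside $S_Q$, which contradicts the hypothesis. Since $(-,-)_W$ is nondegenerate on the two-dimensional space $W$, I first fix an isotropic basis $f_+, f_-$ of $W$ with $(f_+, f_-)_W = 1$, and write any rank two supercharge uniquely as $Q = u \otimes f_+ + v \otimes f_-$ with $u, v \in S_+$ linearly independent and $S_Q = \mathrm{span}(u,v)$. Expanding the bracket,
\[
[Q,Q] = \gamma(u,u)(f_+,f_+)_W + 2\gamma(u,v)(f_+,f_-)_W + \gamma(v,v)(f_-,f_-)_W = 2\,\gamma(u,v),
\]
so $Q$ is square-zero precisely when $\gamma(u,v)=0$. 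On the other hand, $\cP\cap\P(S_Q)=\emptyset$ says exactly that every nonzero element of $S_Q$ is impure, so in particular $u$ and $v$ both lie in the generic open $\Spin(V)$-orbit.

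By Igusa's classification $\Spin(V)$ acts transitively on impure spinors, so I may take $u = 1 + e^\vee_{2345}$. Taking $\Omega = e^\vee_1 \wedge \dots \wedge e^\vee_5$, the formula for $\gamma$ gives $\gamma(u,u) = \Omega^{-1}(e^\vee_{2345}) = \pm e_1 \neq 0$. A direct computation -- essentially the one already carried out for $\dim\Im([Q,-])$ -- shows that the kernel of $\gamma(u,-)\colon S_+ \to V$ is seven-dimensional and consists exactly of the spinors of the form $v = t\,(1 - e^\vee_{2345}) + \omega$ with $t \in \CC$ and $\omega \in \wedge^2\mathrm{span}(e^\vee_2,\dots,e^\vee_5)$ (the sign being forced by $u \notin \ker\gamma(u,-)$). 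Since $Q$ is square-zero, $v$ is of this form, and I then read off $\gamma(v,v)$ term by term: $\Omega^{-1}(v^{(0)}\wedge v^{(4)}) = -t^2\,\Omega^{-1}(e^\vee_{2345}) \in \CC\, e_1$; the term $-\tfrac12\Omega^{-1}(\omega\wedge\omega)$ also lies in $\CC\, e_1$, because $\omega\wedge\omega \in \wedge^4\mathrm{span}(e^\vee_2,\dots,e^\vee_5) = \CC\, e^\vee_{2345}$; and $\omega\vee\Omega^{-1}(v^{(4)}) = -t\,(\omega\vee e_1) = 0$ since $\omega$ involves no $e^\vee_1$. Hence $\gamma(v,v) \in \CC\, e_1 = \CC\,\gamma(u,u)$.

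Consequently $\gamma(au+bv,au+bv) = a^2\gamma(u,u) + b^2\gamma(v,v)$ takes values in the one-dimensional space $\CC\, e_1$, so as a $\CC$-valued quadratic form in $(a,b)$ it has a nontrivial zero $(a_0,b_0)$; the corresponding nonzero vector $a_0 u + b_0 v \in S_Q$ is then a pure spinor, contradicting $\cP\cap\P(S_Q)=\emptyset$. This shows there is no square-zero rank two supercharge with $\cP\cap\P(S_Q)=\emptyset$. The only step carrying real content is the kernel computation together with the observation that $\gamma$ maps every kernel element into the line $\CC\,\gamma(u,u)$ -- this is exactly where the argument departs from the earlier rank two cases, where the analogous restriction of $\gamma$ produced independent vectors -- and it follows at once from the shape of the $\gamma$-formula once one notes that $\omega\wedge\omega$ lives in a one-dimensional space and that the cross term drops out. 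One can also argue more invariantly: the ten-dimensional identity $\gamma(\psi,\psi)\cdot\psi = 0$ gives $\gamma(v,v)\in\mathrm{Ann}(v)$, and since $\mathrm{Ann}(u)=\CC\, e_1\subseteq\mathrm{Ann}(v)$ for $v$ in the kernel while $\mathrm{Ann}(v)$ is one-dimensional whenever $v$ is impure, one again obtains $\gamma(v,v)\in\CC\,\gamma(u,u)$.
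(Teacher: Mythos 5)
Your proof is correct and follows essentially the same strategy as the paper's: normalize one impure spinor by the transitive $\Spin(V)$-action, solve the cross-term condition $\gamma(u,v)=0$ explicitly, deduce $\gamma(v,v)\in\CC\,\gamma(u,u)$, and conclude via the nontrivial zero of a binary quadratic form over $\CC$. The only differences are cosmetic -- you use the representative $1+e^\vee_{2345}$ where the paper uses $e^\vee_{23}+e^\vee_{45}$, and you spell out the kernel computation that the paper leaves as "it is easy to see."
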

\begin{proof}
	Assuming we have such a $Q$, we can pick a basis of two isotropic vectors $(w_1,w_2)$ of $W$ and expand
	\begin{equation}
		Q = \psi_1 \otimes w_1 + \psi_2 \otimes w_2,
	\end{equation}
	where both $\psi_1$ and $\psi_2$ are impure spinors. The square-zero condition is then equivalent to $\gamma(\psi_1, \psi_2) =0$. We now show that this already implies $\gamma(\psi_1, \psi_1) = \alpha \gamma(\psi_2, \psi_2)$ for some $\alpha \in \CC$.
	
	Without loss of generality, we can use the $\Spin(10)$ action to arrange $\psi_1 = e_{23}^\vee + e_{45}^\vee$. In particular, we have $\gamma(\psi_1, \psi_1)= e_1$. Writing $\psi_2 = \psi_2^{(0)} + \psi_2^{(2)} + \psi_2^{(4)}$ we find that the square-zero condition gives two equations:
	\begin{equation}
		\psi_1 \vee \Omega^{-1}(\psi_2^{(4)}) = 0 \qquad \text{and} \qquad \psi_1 \wedge \psi_2^{(2)} = 0.
	\end{equation}
	Solving these, it is easy to see that $\gamma(\psi_2, \psi_2)$ is proportional to $e_1$.
	
	However, now we can expand a general element $\psi \in S_Q$ as a linear combination $\psi = a \psi_1 + b \psi_2$ to find
	\begin{equation}
	\gamma(\psi,\psi) = p(a,b) e_1
	\end{equation}
	for some symmetric quadratic polynomial $p$ in two variables. Every such polynomial has non-trivial zeros over $\CC$ so that we find that $S_Q$ necessarily contains a pure spinor which completes the proof.
\end{proof}

\printbibliography

\end{document}